\begin{document}
 \title{Constructive  Multiuser Interference in Symbol Level Precoding
for the MISO Downlink Channel}
 \author{
 \IEEEauthorblockN{ Maha~Alodeh,~\IEEEmembership{Student~Member, IEEE}, Symeon Chatzinotas, \IEEEmembership{Senior~Member,~IEEE,}~\newline
 Bj\"{o}rn Ottersten, \IEEEmembership{Fellow Member,~IEEE}\thanks{Maha Alodeh, Symeon Chantzinotas and  Bj\"{o}rn Ottersten are
with Interdisciplinary Centre for Security Reliability and Trust (SnT) at the University
of Luxembourg, Luxembourg. E-mails:\{ maha.alodeh@uni.lu, symeon.chatzinotas
@uni.lu, and bjorn.ottersten@uni.lu\}. \newline
This work is supported by Fond National de la Recherche Luxembourg (FNR)
projects,
project Smart Resource Allocation for Satellite Cognitive Radio (SRAT-SCR)  ID:4919957 and Spectrum Management and Interference Mitigation in Cognitive Radio Satellite Networks SeMiGod. \newline
Part of this work is accepted for publication in the proceedings of IEEE International Symposium on Information
Theory (ISIT), Honolulu-Hawaii, June 2014.}}\\
  
 } 
 
 \maketitle
 \IEEEpeerreviewmaketitle
 \begin{abstract}
\boldmath This paper investigates the problem of interference among the simultaneous
multiuser
transmissions in the downlink of multiple
antennas systems. Using symbol level precoding, a new approach towards the multiuser interference is discussed along
this paper. The concept of exploiting the interference between
the spatial multiuser transmissions by jointly utilizing the data information
(DI)
and channel state information (CSI), in order to design symbol-level precoders,
is proposed. In this direction, the interference among the data streams
is transformed under certain conditions to useful signal that can improve the signal to interference
noise ratio (SINR) of the downlink transmissions. We propose a maximum ratio
transmission (MRT) based algorithm that jointly exploits DI and CSI to glean the benefits from constructive multiuser interference. Subsequently, a relation between the constructive interference downlink transmission and physical
layer multicasting is established. In this context, novel constructive interference precoding techniques that tackle the transmit power minimization (min power) with
individual SINR constraints at each user's receivers is proposed. Furthermore, fairness through maximizing the weighted minimum SINR (max min SINR) of the users is addressed by finding the link between the min power and max min SINR problems. Moreover, heuristic precoding techniques are proposed
to tackle the weighted sum rate problem. 
 Finally, extensive numerical results
show that the proposed schemes outperform other state of the art techniques.\\

\begin{IEEEkeywords}
Constructive interference, multiuser MISO, maximum ratio
transmission, multicast.
\end{IEEEkeywords}
\end{abstract}

\vspace{-0.1cm}
\section{Introduction}
\IEEEPARstart{I}{nterference}
is one of the crucial and limiting factors in wireless networks. The idea of utilizing the time and frequency resources has been proposed in the literature to allow different users to share the resouces without inducing harmful interference. The
concept of exploiting the users' spatial separation has been a fertile research domain for more than one
decade. This can be implemented by adding multiple antennas at one or both
communication sides. Multiantenna transceivers empower the communication systems with more degrees
of freedom that can boost the performance if the multiuser interference is mitigated
properly. Exploiting the space dimension, to serve different users simultaneously
in the same time slot and the same frequency band through spatial division multiplexing (SDMA), has been investigated in \cite{roy}-\cite{caire}. 

The applications of SDMA, in which a single multiple antennas transmitter
wants to communicate with multiple receivers,
vary according to the requested
service. The first service type is known as a broadcast in which a transmitter
  has
a common message to be sent to multiple receivers. In physical layer research,  this service has been studied under the term of physical layer multicasting
 (i.e. \textit{PHY multicasting})
 \cite{multicast}-\cite{multicast-jindal}. Since a single data stream
is sent to all receivers, there is no multiuser interference.
In the remainder of this paper, this case will be referred to as multicast
transmission.
The second service type is known as unicast, in which a transmitter
has an individual message for each receiver. Due to the nature of the wireless medium
and the use of multiple antennas, multiple simultaneous unicast transmissions are possible in the downlink of a base station (BS). In these cases, multiple streams are
simultaneously sent, which motivates precoding techniques that mitigate the
multiuser interference. In information theory terms, this
service type has been studied using the broadcast channel \cite{caire}. In
the remainder of this paper, this case will be referred to as \textit{downlink} transmission.\smallskip

In the literature, the precoding techniques for downlink tranmission can be further classified as:
\begin{enumerate}
\item \textit{Group-level precoding} in which multiple codewords are transmitted simultaneously
but each codeword is addressed to a group of users. This case is also known as multigroup
multicast precoding \cite{g-multicast}-\cite{silva} and the precoder design
is dependant on the channels in each user group.
\item \textit{User-level precoding} in which multiple codewords are transmitted simultaneously
but each codeword is addressed to a single user. This case is also known as multiantenna
broadcast channel precoding \cite{mats}-\cite{ghaffar} and the precoder design
is dependant on the channels of the individual users. This is a special
case of group level precoding where each group consists of a single user.
\item \textit{Symbol-level precoding} in which multiple symbols are transmitted simultaneously
and each symbol is addressed to a single user 
\cite{Christos-1}-\cite{maha}. This is also known as a constructive interference
precoding and the precoder design is dependent on  both the channels (CSI) and
the symbols of the users (DI).
\end{enumerate} 
In the last category, the main idea is to constructively correlate
the interference among the spatial streams rather than fully decorrelate
them as in the conventional schemes \cite{haardt}. In \cite{Christos-1}, the interference in the scenario of BPSK and QPSK is classified into types: constructive and desctructive. Based on this classification, a selective channel inversion scheme is proposed to eliminate the destructive interference while it keeps the constructive one to be received at the users' terminal. A more advanced scheme is proposed in \cite{Christos}, which  rotates the destructive interference to be received as useful signal with the constructive one. These schemes outperform the conventional precodings \cite{haardt} and show considerable gains. However, the anticipated
gains come at the expense of additional complexity at the system design level. Assuming
that
the channel coherence time is $\tau_{c}$, and the symbol period is $\tau_s$, with $\tau_c\gg\tau_s$ for slow fading
channels, the user precoder has to be recalculated with a frequency of $\frac{1}{\tau_c}$
in comparison with the symbol based precoder $\frac{1}{\min(\tau_c,\tau_s)}=\frac{1}{\tau_s}$. Therefore, faster precoder calculation and switching is needed in
the symbol-level precoding which can be translated to more expensive hardware.  
The contributions of this paper can be summarized in the following points:

\begin{itemize}
\item  A generalized characterization of the constructive interference for any M-PSK is described. Based on this characterization, we propose a new constructive interference precoding scheme, called constructive interference maximum ratio transmissions (CIMRT). This technique exploits the weakness points of constructive interference zero forcing precoding (CIZF) in \cite{Christos}. 

\item We find the relation between the constructive interference precoding problem and PHY layer multicasting and verify it for any M-PSK modulation scenario. 

\item We propose different symbol based precoding schemes that aim at optimizing different performance metrics such as minimizing the transmit power while acheiving certain SNR targets, maximizing the minimum SNR among the user while keeping the power constraint in the system satisfied and finally maximizing the sum rate of all users without exceeding the permissible amount of power in the system.   


\end{itemize}
\vspace{-0.cm} 
 The rest of the paper is organized as follows: the channel and the system
 model is explained in section (\ref{system}), while section (\ref{traditional}) discusses
how the conventional downlink precoding techniques tackle interference. Symbol
level precoding is described in (\ref{constructive}). Moreover, techniques
that exploit the multiuser interference in symbol-based precoding are described  (\ref{masouros}). The relation to PHY-layer multicasting and the solution to the power min problem are investigated
in (\ref{powmin}). The problem of maximizing the minimum SINR is tackled in section (\ref{maxmins}). Heuristic
sum rate maximization techniques are discussed (\ref{wsr}). Finally, the performance
of the proposed
algorithms is evaluated in section (\ref{sim}).\smallskip 
  
\textbf{Notation}:  We use boldface upper and lower case letters for
 matrices and column vectors, respectively. $(\cdot)^H$, $(\cdot)^*$
 stand for Hermitian transpose and conjugate of $(\cdot)$. $\mathbb{E}(\cdot)$ and $\|\cdot\|$ denote the statistical expectation and the Euclidean norm,  $\mathbf{A}\succeq \mathbf{0}$ is used to indicate the positive
semidefinite matrix. $\angle(\cdot)$, $|\cdot|$ are the angle and magnitude  of $(\cdot)$ respectively. $\mathcal{R}(\cdot)$, $\mathcal{I}(\cdot)$
 are the real and the imaginary part of $(\cdot)$. Finally, the vector of
 all zeros with length of $K$ is defined as $\mathbf{0}^{K\times 1}$.
\vspace{-0.15cm} 
\section{System and Signal Models}
\label{system}
We consider a single-cell multiple-antenna downlink scenario,
where a single BS is equipped with $M$
transmit antennas that serves $K$ user terminals,
each one of them equipped with a single receiving antenna. The adopted
modulation technique is M-PSK.
We assume a quasi static block fading channel $\mathbf{h}_j\in\mathbb{C}^{1\times
M}$ between
the BS antennas and the $j^{th}$ user, where the received signal at
j$^{th}$ user is
written as
\vspace{-0.3cm}
\begin{eqnarray}
y_j[n]&=&\mathbf{h}_j\mathbf{x}[n]+z_j[n].
\end{eqnarray} $\mathbf{x}[n]\in\mathbb{C}^{M\times 1}$ is the transmitted signal vector from the multiple antennas
transmitter and  $z_j$ denotes the noise at $j^{th}$ receiver, which is assumed i.d.d  complex Gaussian distributed variable $\mathcal{CN}(0,1)$. A compact formulation
of the received signal at all users' receivers can be written as
\vspace{-0.1cm}
\begin{eqnarray}
\mathbf{y}[n]&=&\mathbf{H}\mathbf{x}[n]+\mathbf{z}[n].
\end{eqnarray}
Let $\mathbf{x}[n]$ be written as $\mathbf{x}[n]=\sum^K_{j=1}\mathbf{w}_j[n]d_j[n]$,
where $\mathbf{w}_j$ is the $\mathbb{C}^{M\times
1}$ unit power precoding vector for the user $j$. The received signal at $j^{th}$
user ${y}_j$ in $n^{th}$ symbol period is given by
\begin{eqnarray}
\label{rx_o}
{y}_j[n]=\sqrt{p_j[n]}\mathbf{h}_j\mathbf{w}_j[n] d_j[n]+\displaystyle\sum_{k\neq j}\sqrt{p_k[n]}\mathbf{h}_j\mathbf{w}_k[n]
d_k[n]+z_j[n]
\end{eqnarray}
where $p_j$ is the allocated power to the $j^{th}$ user. A more detailed compact system formulation
is obtained by stacking the received signals and the noise
components for the set of K selected users as
\begin{eqnarray}
\mathbf{y}[n]=\mathbf{H}\mathbf{W}[n]\mathbf{P}^{\frac{1}{2}}[n]\mathbf{d}[n]+\mathbf{z}[n]
\end{eqnarray}
with $\mathbf{H} = [\mathbf{h}_1,..., \mathbf{h}_K]^T \in\mathbb{C}^{K\times M} $, $\mathbf{W}=[\mathbf{w}_1, ...,\mathbf{w}_K]\in\mathbb{C}^{nt\times M}$ as the
compact channel and precoding matrices. Notice that the transmitted signal $\mathbf{d}\in\mathbb{C}^{K\times 1}$
includes the uncorrelated data symbols $d_k$ for all users with $\mathbb{E}[{|d_k|^2}] = 1$, $\mathbf{P}^{\frac{1}{2}}[n]$
is the power allocation matrix $\mathbf{P}^{\frac{1}{2}}[n]=diag(\sqrt{p_1[n]},\hdots,\sqrt{p_K[n]})$.
It should be noted that CSI and DI are available at the transmitter side. 
\vspace{-0.2cm}
\section{Conventional Multiuser Precoding Techniques}
\label{traditional}
The main goal of transmit beamforming is to increase the signal power at
the intended user and mitigate the interference to non-intended users. This
can be obtained by precoding the transmitted symbols in a way that optimizes
the spatial directions of the simultaneous transmissions by means
of  beamforming. This can be mathematically translated to a design problem that targets
 beamforming vectors to have maximal inner products with 
 the intended channels and minimal inner products with the non-intended user channels. There are several proposed beamforming techniques in the literature.
One of the simplest approaches is to encode the transmitted
signal by pre-multiplying it with the pseudo inverse of the multiuser matrix
channel.
Several approaches have been
proposed including minimizing the sum power while satisfying
 a set of SINR constraints\cite{mats} and  maximizing the jointly achievable SINR
margin under a  power constraint\cite{boche}. 
In any scenario,  
the generic received signal can be formulated as 
\begin{eqnarray}\nonumber
\label{interference}
\mathbf{y}[n]&=&\mathbf{H}\mathbf{x}[n]+\mathbf{z}[n]=\mathbf{H}\mathbf{W}[n]\mathbf{P}^{\frac{1}{2}}[n]\mathbf{d}[n]+\mathbf{z}[n]\\
&=&\begin{bmatrix}\underset{\text{desired}}{\underbrace{a_{11}}}&\underset{\text{interference}}{\underbrace{a_{12}}}&\hdots&\underset{\text{interference}}{\underbrace{a_{1K}}}\\
a_{21}&\underset{\text{desired}}{\underbrace{a_{22}}}&\dots&a_{2K}\\
\vdots&\vdots&\vdots&\vdots\\
\underset{\text{interference}}{\underbrace{a_{K1}}}&\underset{\text{interference}}{\underbrace{a_{K2}}}&\dots&\underset{\text{desired}}{\underbrace{a_{KK}}}
\end{bmatrix}\begin{bmatrix}d_1\\
\vdots\\
d_K\end{bmatrix}+\mathbf{z}.
\end{eqnarray}
The corresponding SINR can be expressed as\\
\begin{eqnarray}
\gamma_{j}=\frac{p_k\|\mathbf{h}_j\mathbf{w}_k\|^2}{\sum^K_{i=1,i\neq k}p_i\|\mathbf{h}_j\mathbf{w}_i\|^2+\sigma^2}=\frac{|a_{jj}|^2}{\sum^K_{i=1,i\neq
k}|a_{ji}|^2+\sigma^2}.
\end{eqnarray}

This paper tries to go beyond this conventional look at the interference by employing symbol-level precoding.
This approach can under certain conditions convert the inner product 
with the non-intended channels into useful power by maximizing them but
with the specific directions to which constructively add-up at each user receivers.
Taking into account  the I/Q plane of the symbol detection, the constructive interference is achieved by using the interfering signal vector to move the received point deeper into the correct detection region. Considering
that each user receives a constructive interference from other users' streams,
the received signal can be written as
\vspace{-0.1cm}
\begin{eqnarray}
y_j[n]=\sum^K_{i=1}\underset{a_{ji}[n]d_{k}[n]}{\underbrace{\mathbf{h}_j\mathbf{w}_i[n]d_i[n]}}+z_k[n].
\end{eqnarray}
 This yields the SINR expression for M-PSK symbols as 
\begin{eqnarray}
\gamma_{k}[n]=\frac{\|\sum^K_{i=1}\mathbf{h}_j\mathbf{w}_i[n]\|^2}{\sigma^2}=\frac{|\sum^K_{i=1}a_{ji}|^2}{\sigma^2}.
\end{eqnarray}
Different precoding techniques that redesign the terms $a_{ji},j\neq i$ to constructively
correlate them with $a_{jj}$ are proposed in the next sections (\ref{masouros})-(\ref{wsr}).
\vspace{-0.2cm}
\subsection{Power constraints for user based and symbol based precodings}
In the conventional user based precoding, the transmitter needs to precode every $\tau_{c}$
which means that the power constraint has to be satisfied along the coherence time
$\mathbb{E}_{\tau_c}\{\|\mathbf{x}\|^2\}\leq
P$. Taking the expectation of $\mathbb{E}_{\tau_c}\{\|\mathbf{x}\|^2\}=\mathbb{E}_{\tau_c}\{tr(\mathbf{W}\mathbf{d}\mathbf{d}^H\mathbf{W}^H)\}$,
and since $\mathbf{W}$ is fixed along $\tau_c$, the previous expression can
be reformulated as $tr(\mathbf{W}\mathbb{E}_{\tau_c}\{\mathbf{d}\mathbf{d}^H\}\mathbf{W}^H)=tr(\mathbf{W}\mathbf{W}^H)=\sum^K_{j=1}\|\mathbf{w}_j\|^2$,
where $\mathbb{E}_{\tau_c}\{\mathbf{d}\mathbf{d}^H\}=\mathbf{I}$ due to uncorrelated
symbols over $\tau_c$.

However, in symbol level precoding the power constraint should be guaranteed
for each symbol vector transmission namely for each $\tau_s$. In this case
the power constraint equals to $\|\mathbf{x}\|^2=\mathbf{W}\mathbf{d}\mathbf{d}^H\mathbf{W}^H=\|\sum^K_{j=1}\mathbf{w}_jd_j\|^2$.
In the next sections, we characterize the constructive interference and show
how to exploit it in the multiuser downlink transmissions\footnote{From now on, we assume that the transmssion changes at each symbol and we drop the time index for the ease of notation }. 

 \vspace{-0.3cm}
       
 \section{Constructive Interference}
 \label{constructive}
 \vspace{-0.1cm}
The interference is a random deviation which can move the desired constellation point in any direction. To address this problem, the power of the interference has been used in the past to regulate its effect on the desired signal point. 
The interference among the multiuser spatial streams
leads to deviation of the received symbols outside of their detection region. However,
in symbol level precoding (e.g. M-PSK) this interference pushes the received symbols further into the correct detection region and, as a consequence it enhances the system performance. Therefore, the interference can
be classified into constructive or destructive based on whether it facilitates or deteriorates the correct detection of the received symbol. For BPSK and QPSK scenarios, a detailed classification of interference is discussed thoroughly in \cite{Christos-1}. In this section, we describe the required conditions to have constructive interference for any M-PSK modulation.
\vspace{-0.1cm}
\subsection{Constructive Interference Definition}
Assuming both DI and CSI are available at the transmitter, the unit-power
 created interference from the $k^{th}$ data stream on $j^{th}$ user can be formulated as:
\vspace{-0.1cm}
\begin{equation}
\psi_{jk}=\frac{\mathbf{h}_{j}\mathbf{w}_k}{\|\mathbf{h}_{j}\|\|\mathbf{w}_k\|}.
\end{equation}
Since the adopted modulations are M-PSK ones, a definition for
constructive interference can be stated as\smallskip

\begin{newtheorem}{lemma}{\textbf{Lemma}}
\begin{lemma}
\label{lemma}
For any M-PSK modulated symbol $d_k$, it is said to receive constructive
interference from another simultaneously transmitted symbol $d_j$ which is
associated with $\mathbf{w}_j$ if and only if the following inequalities hold   
\begin{equation}\nonumber
\label{one}
\angle{d_j}-\frac{\pi}{M}\leq \arctan\Bigg(\frac{\mathcal{I}\{\psi_{jk}d_{k}\}}{\mathcal{R}\{\psi_{jk}d_{k}\}}\Bigg)\leq \angle{d_j}+\frac{\pi}{M},
\end{equation}
\begin{equation}\nonumber
\label{two}
\mathcal{R}\{{d_k}\}.\mathcal{R}\{\psi_{jk}
d_{j}\}>0, \mathcal{I}\{{d_k}\}.\mathcal{I}\{\psi_{jk}d_{j}\}>0.\\
\end{equation}
\end{lemma}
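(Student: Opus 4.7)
My plan is to derive the two inequalities from a purely geometric characterization of constructive interference in the I/Q plane. For an M-PSK constellation, the correct decision region of a symbol $d$ is the radial sector of angular half-width $\pi/M$ about $\angle d$, because the M-PSK decision boundaries are the rays equispaced by $2\pi/M$. The noiseless signal at user $j$ is the sum of the desired term and cross-stream contributions of the form $\mathbf{h}_j\mathbf{w}_k d_k = \|\mathbf{h}_j\|\|\mathbf{w}_k\|\,\psi_{jk}d_k$. Declaring such a term constructive amounts to requiring that when it is added to the desired term it pushes the received point deeper into the intended sector; a clean condition is that the interference vector itself have angle inside that sector.

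I would first make the angular condition explicit. Writing $\psi_{jk}d_k = |\psi_{jk}|\,e^{\mathrm{j}(\angle\psi_{jk}+\angle d_k)}$ and extracting its principal-branch angle via $\arctan(\mathcal{I}\{\cdot\}/\mathcal{R}\{\cdot\})$, sector membership $\angle(\psi_{jk}d_k)\in[\angle d_j-\pi/M,\,\angle d_j+\pi/M]$ translates immediately into the first inequality of the lemma.

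Next I would address the two-fold ambiguity of $\arctan$. Since $\arctan(\mathcal{I}/\mathcal{R})$ returns the same value for $z$ and $-z$, the angular inequality on its own would also admit the antipodal direction (roughly $-d_j$), which is the maximally destructive case. The sign conditions $\mathcal{R}\{d_k\}\cdot\mathcal{R}\{\psi_{jk}d_j\}>0$ and $\mathcal{I}\{d_k\}\cdot\mathcal{I}\{\psi_{jk}d_j\}>0$ are precisely a disambiguation test: they force the interference vector to lie in the correct half-plane rather than its antipode, and together with the angular condition they pin the interference to the intended sector. The equivalence then follows by checking that the sector exactly equals the intersection of the arctan inequality with the admissible half-plane.

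The subtle step will be handling general $M$ uniformly. For $M\in\{2,4\}$ the decision sector coincides with a half-plane or a quadrant and the sign conditions essentially suffice on their own, while for $M\geq 8$ the sector is strictly narrower than a quadrant, so one must verify carefully that the combination of the arctan inequality and the two sign conditions exactly carves out the sector, without admitting spurious cross-quadrant configurations or excluding legitimate ones near the sector boundaries. That boundary-case bookkeeping is where I expect the main effort to lie.
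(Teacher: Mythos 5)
Your proposal matches the paper's own argument essentially step for step: identify the M-PSK decision region as the angular sector of half-width $\pi/M$ about $\angle d_j$, read the $\arctan$ inequality as sector membership of the interfering component, and use the two sign conditions to resolve the $\pi$-ambiguity of $\arctan(\mathcal{I}/\mathcal{R})$ between the intended direction and its antipode. The boundary bookkeeping for general $M$ that you flag as the subtle step is in fact glossed over in the paper's (quite informal) proof, so your plan is, if anything, slightly more careful than the original.
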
\smallskip
 \vspace{0.1cm}
\begin{proof}
For any M-PSK modulated symbol, the region of correct detection lies in $\theta_j\in[\angle d_j-\frac{\pi}{M},\angle d_j+\frac{\pi}{M}]$, where $\theta_j$ is the angle of the detected symbols.
In order for the interference to be constructive, the received interfering signal
should lie in the region of the target symbol. For the first condition, the
${\arctan}(\cdot)$ function checks whether the received interfering signal
originating from the $d_k$$^{th}$ transmit symbol is located in
the detection region of the target symbol. However, the trigonometric
functions are not one-to-one functions. This means that it manages to check
the two quadrants which the interfering symbol may lie in. To find which
one of these quadrants is the correct one, an additional constraint is added to check the sign compatibility
of the target and received interfering signals. 
\end{proof}
\end{newtheorem}

\begin{newtheorem}{cor}{\textbf{Corollary}}
\begin{cor}
The constructive interference is mutual.
If the symbol $d_j$ constructively interferes with $d_k$, then
the interference from transmitting  the symbol $d_k$ 
is constructive to $d_j$.\\ 
\end{cor}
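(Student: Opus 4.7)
The plan is to reduce the constructive interference condition of Lemma~1 to a phase constraint on the normalized cross-coupling $\psi_{jk}$, and then exploit the conjugate symmetry $\psi_{kj}=\psi_{jk}^{*}$ that the paper's MRT-type precoder construction induces. This makes swapping the roles of $j$ and $k$ in the lemma's inequalities a trivial consequence of complex conjugation.

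First I would rewrite the lemma's angular condition: $\angle d_j-\pi/M\le \angle(\psi_{jk}d_k)\le \angle d_j+\pi/M$ is equivalent to $\angle\psi_{jk}\in[\,\angle d_j-\angle d_k-\pi/M,\,\angle d_j-\angle d_k+\pi/M\,]$ (mod $2\pi$), with the sign conditions resolving the $\pi$-periodicity ambiguity inherent in the arctan. Next I would establish $\psi_{kj}=\psi_{jk}^{*}$: writing $\mathbf{w}_k=c_k\mathbf{h}_k^{H}$ with real normalizer $c_k$ yields $\mathbf{h}_j\mathbf{w}_k=c_k\,\mathbf{h}_j\mathbf{h}_k^{H}$ versus $\mathbf{h}_k\mathbf{w}_j=c_j\,(\mathbf{h}_j\mathbf{h}_k^{H})^{*}$; after dividing by $\|\mathbf{h}\|\|\mathbf{w}\|$ the positive scalars $c_j,c_k$ cancel, leaving only the complex-conjugate relation. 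Hence $\angle\psi_{kj}=-\angle\psi_{jk}$.

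Plugging this identity into the target inclusion $\angle\psi_{kj}\in[\,\angle d_k-\angle d_j-\pi/M,\,\angle d_k-\angle d_j+\pi/M\,]$ shows that the required sector is precisely the negative of the one that the hypothesis secures, so the containment follows immediately. The sign-compatibility condition transfers in the same manner: conjugation flips only the imaginary part, so after relabeling $(j,k)\to(k,j)$ the pair $\mathcal{R}\{d_k\}\mathcal{R}\{\psi_{jk}d_j\}>0$ and $\mathcal{I}\{d_k\}\mathcal{I}\{\psi_{jk}d_j\}>0$ is mapped into its mirror $\mathcal{R}\{d_j\}\mathcal{R}\{\psi_{kj}d_k\}>0$ and $\mathcal{I}\{d_j\}\mathcal{I}\{\psi_{kj}d_k\}>0$, which I would verify quadrant by quadrant for any M-PSK pair.

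The main obstacle is the careful handling of the arctan's $\pi$-periodicity: without the sign conditions, the conjugate-induced sector could be shifted by $\pi$ and a phantom solution in the opposite half-plane could spuriously match one inequality but violate detection. A secondary subtlety is that the mutuality genuinely relies on the MRT-style symmetry of the precoders; for completely arbitrary $\mathbf{w}_j,\mathbf{w}_k$ the scalars $\mathbf{h}_j\mathbf{w}_k$ and $\mathbf{h}_k\mathbf{w}_j$ are independent, so I would state this structural assumption explicitly at the outset, which is consistent with the CIMRT framework developed in the remainder of the paper.
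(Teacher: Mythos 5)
The paper states this corollary with no proof at all, so there is nothing to compare line by line; your proposal supplies the argument the authors evidently had in mind but left implicit. Your central move --- reducing Lemma~1 to the sector condition $\angle\psi_{jk}\in[\angle d_j-\angle d_k-\pi/M,\;\angle d_j-\angle d_k+\pi/M]$ and observing that swapping $j$ and $k$ asks for exactly the negated sector, which is delivered by $\angle\psi_{kj}=-\angle\psi_{jk}$ --- is correct and is consistent with how the paper actually uses the corollary: in the CIZF/CIMRT sections the cross-coupling is $\rho_{jk}=\mathbf{h}_j\mathbf{h}_k^{H}/(\|\mathbf{h}_j\|\|\mathbf{h}_k\|)$, which satisfies $\rho_{kj}=\rho_{jk}^{*}$ automatically. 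Your remark that mutuality genuinely requires this conjugate symmetry and fails for arbitrary $\mathbf{w}_j,\mathbf{w}_k$ (where $\mathbf{h}_j\mathbf{w}_k$ and $\mathbf{h}_k\mathbf{w}_j$ are unrelated) is a real clarification that the paper omits; stating the structural hypothesis explicitly makes the corollary honest.

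One step deserves more care than you give it. You assert that the sign-compatibility pair ``transfers in the same manner'' because ``conjugation flips only the imaginary part,'' but $\psi_{jk}^{*}d_k$ is not the conjugate of $\psi_{jk}d_j$, so the swapped sign conditions are not a formal image of the original ones under conjugation. Indeed, the bare statement ``$d_k$ and $\psi d_j$ lie in the same quadrant'' is \emph{not} symmetric in $(j,k)$ on its own: writing $\phi=\angle\psi_{jk}$, the forward condition constrains $\phi$ to an interval of the form $[90^{\circ}q_k-\theta_j,\,90^{\circ}q_k-\theta_j+90^{\circ})$ while the reversed one constrains it to $(\theta_k-90^{\circ}q_j-90^{\circ},\,\theta_k-90^{\circ}q_j]$, and these intervals generally differ. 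The mutuality of the sign conditions only holds in conjunction with the angular condition, which pins $\phi$ near $\angle d_j-\angle d_k$ and thereby forces both quadrant tests to succeed or fail together; equivalently, one should prove mutuality for the single unambiguous statement ``$\angle(\psi_{jk}d_k)$ lies within $\pi/M$ of $\angle d_j$ modulo $2\pi$,'' of which the arctan-plus-sign pair is the paper's (somewhat garbled) encoding. So the ``quadrant by quadrant'' verification you defer is not routine bookkeeping --- it is where the two conditions must be combined --- and the proof is incomplete until it is written out.
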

%
\end{newtheorem}

For constructively interfering symbols, the value of the received signal can be bounded as
\vspace{-0.1cm}
\begin{eqnarray}
\vspace{-0.1cm}
\label{Rxs}
\sqrt{p}_j\|\mathbf{h}_j\|\overset{(a)}{\leq} |y_j|\overset{(b)}{\leq} \|\mathbf{h}_{j}\|\big(\sqrt{p}_j+\displaystyle\sum^{K}_{\forall
k,k\neq j}\sqrt{p}_k|\psi_{jk}|\big).
\vspace{-0.5cm}
\end{eqnarray}
The inequality (a) holds when all simultaneous users are orthogonal (i.e. $\psi_{jk}=0$), while (b) holds when all created interference
is aligned with the transmitted symbol as $\angle d_k=\angle\psi_{jk} d_j$
 and $\psi_{jk}=0$, $\angle d_k=\angle\psi_{jk} d_j$.  
Eq. (\ref{Rxs}) indicates that in the case of constructive
interference, having fully correlated signals is beneficial as they contribute
to received signal power. 
In conventional precoding techniques, the previous inequality
can be reformulated as
\begin{eqnarray}
\label{rxs}
\vspace{-0.6cm}
0\overset{(a)}{\leq} |y_j| \overset{(b)}{\leq} \sqrt{p_j}\|\mathbf{h}_j\|.
\vspace{-0.5cm}
\end{eqnarray}
The worst case scenario can occur when all users are co-linear $\psi_{jk}\rightarrow
1$. The channel cannot
be inverted and thus the interference cannot be mitigated. The optimal scenario takes place
when all users have physically orthogonal channels which entails no multiuser interference. Therefore, utilizing the CSI
and DI leads to higher performance in comparison with conventional techniques.

\vspace{-0.2cm}
\section{Constructive Interference Precoding for MISO Downlink Channels }
\label{masouros}
In the remainder of this paper, it is assumed that the transmitter is capable
of designing precoding on symbol level  utilizing both
CSI and DI\footnote{From this section, we combine the the precoding design with power allocation}.
\vspace{-0.1cm}    
\subsection{Correlation Rotation Zero Forcing Precoding (CIZF)}

The precoder aims at minimizing the mean square error while it takes into
the account the rotated constructive interference \cite{Christos}. The optimization problem
can be formulated as
\vspace{-0.25cm}    
\begin{eqnarray}\nonumber
\mathcal{J}=\min_{\mathbf{W}}\quad\mathbb{E}\{\Vert\mathbf{R}_{\phi}\mathbf{d}-(\mathbf{H}\mathbf{W}\mathbf{d}+\mathbf{z})\Vert^2\},
\end{eqnarray}
where $\mathbf{P}^{\frac{1}{2}}=\mathbf{I}$ in this scenario. The solution can be easily expressed as 
\begin{eqnarray}
\label{CIZF}
\vspace{-0.2cm}
\mathbf{W}_{CIZF}=\gamma\mathbf{H}^H(\mathbf{H}\mathbf{H}^H)^{-1}\mathbf{R}_{\phi},
\end{eqnarray}
where $\gamma=\sqrt{\frac{P}{tr\big(\mathbf{R}^H_{\phi}(\mathbf{H}\mathbf{H}^H)^{-1}\mathbf{R}_{\phi}\big)}}$
ensures the power normalization. The cross correlation
factor between the $j^{th}$ user's channel and transmitted $k^{th}$ data
stream can be expressed as
\vspace{-0.1cm}
\begin{eqnarray}
\rho_{jk}=\frac{\mathbf{h}_j\mathbf{h}^H_k}{\|\mathbf{h}_k\|\|\mathbf{h}_j\|}.
\end{eqnarray}
The relative phase $\phi_{jk}$ that grants the constructive simultaneous
transmissions can be expressed as 
\vspace{-0.1cm}
\begin{eqnarray}
\phi_{jk}=\angle d_j-\angle(\rho_{jk}.d_k).
\end{eqnarray}
The corresponding rotation matrix can be implemented as:
\vspace{-0.2cm}
\begin{eqnarray}
\mathbf{R}_{\phi} (j,k)=\rho_{jk}\exp(\phi_{jk}i),
\end{eqnarray}
and the received signal at $j^{th}$ user can be expressed as
\vspace{-0.25cm}
\begin{eqnarray}
\label{mas}
y_j\overset{a}{=}{\gamma}{{\|\mathbf{h}_j\|(\sum^K_{k=1}\rho_{jk}d_k)}}\overset{b}{=}{\gamma}\|\mathbf{h}_j\|(\sum^K_{k=1}\varepsilon_{jk})d,
\end{eqnarray}
where $\varepsilon_{jk}$ has the same magnitude as $\rho_{jk}$ but with different
phase, and $d:d \in \mathbb{C}^{1\times 1},|d|=1, \angle d=\theta,\theta
\in [0,2\pi] $.
By taking a look at (\ref{mas}-b), it has a multicast formulation since
it seems for each user that BS sends the same symbol for all users by applying a user-dependent rotation.. 

\begin{newtheorem}{rem}{\textbf{Remark}}
\begin{rem}
It can be noted that this solution includes a zero forcing step and a correlation
step $\mathbf{R}_\phi$. The correlation step aims at making
the transmit signals constructively received at each user.
Unfortunately, this design fails when we deal with co-linear users $\rho_{jk}\rightarrow
1$. However, intuitively
having co-linear users should create more constructive interference and higher
gain should be anticipated. It can be easily concluded that the source of
this
contradiction is the zero forcing step. In an effort to overcome the problem, we propose a new precoding technique in the next section.   
\end{rem}
\end{newtheorem}
\vspace{-0.1cm}
\subsection{Proposed Constructive Interference Maximum Ratio Transmission
(CIMRT)}

The maximum ratio transmission (MRT) is not suitable for multiuser downlink
transmissions in MISO system due to the intolerable amount of the created interference. On the other hand, this feature makes it a good candidate for
constructive interference. The \textit{naive}
maximum ratio transmission (nMRT) can be formulated as 
\begin{eqnarray}
\mathbf{W}_{\text{nMRT}}=\begin{bmatrix}\frac{{\mathbf{h}_1}^H}{\|\mathbf{h}_1\|}, \frac{{\mathbf{h}_2}^H}{\|\mathbf{h}_2\|},\hdots,\frac{{\mathbf{h}_K}^H}{\|\mathbf{h}_K\|}\end{bmatrix}.
\end{eqnarray}
A new look at the received signal can be viewed by exploiting the singular
value decomposition of $\mathbf{H}=\mathbf{S}\mathbf{V}\mathbf{D}$, and $\mathbf{W}_{nMRT}=\mathbf{D}^H\mathbf{V}^{'}\mathbf{S}^H$ as follows
\vspace{-0.1cm}
\begin{eqnarray}\nonumber
\label{svd}
\vspace{-0.2cm}
\mathbf{y}&=&\mathbf{H}\mathbf{W}\mathbf{d}+\mathbf{z}={{\mathbf{S}\mathbf{V}\mathbf{D}\mathbf{D}^H\mathbf{V}^{'}}}{{\mathbf{S}^H}}\mathbf{d}+\mathbf{z}\\
\quad&=&\mathbf{G}\mathbf{B}\mathbf{d}+\mathbf{z},
\end{eqnarray} 
where
\begin{eqnarray}\nonumber
\mathbf{G}&=&\mathbf{S}\mathbf{V}\mathbf{V}^{'},\quad \mathbf{B}=\mathbf{S}^H.
\end{eqnarray}
$\mathbf{S}\in \mathbb{C}^{K\times K}$ is a unitary matrix that contains
the
left-singular vectors of
$\mathbf{H}$, the matrix $\mathbf{V}$ is an $K\times M$ diagonal matrix with nonnegative real numbers on the diagonal, and $\mathbf{D}\in\mathbb{C}^{nt\times nt}$
contains right-singular vectors of $\mathbf{H}$. $\mathbf{V}^{'}$ is the power scaled of $\mathbf{V}$
to normalize each column in $\mathbf{W}_{\text{nMRT}}$ to unit.
The received signal can be as
\vspace{-0.1cm}
\begin{eqnarray}
y_j=\|\mathbf{h}_j\|(\sum^K_{k=1}\sqrt{p_k}\rho_{jk}d_k)+z_j.
\end{eqnarray}
\vspace{-0.05cm}
Utilizing the reformulation of $\mathbf{y}$ in
(\ref{svd}), the received signal can be written as 
\vspace{-0.05cm}
\begin{eqnarray}
\label{rot}
y_j=\|\mathbf{g}_j\|\sum^K_{k=1}\sqrt{p_k}\xi_{jk}d_k=\|\mathbf{g}_j\|\sum^K_{k=1}\sqrt{p_k}\xi_{jk}\exp(\theta_k)d
\end{eqnarray}
where $\mathbf{g}_j$ is the $j^{th}$ row of the matrix $\mathbf{G}$, $\xi_{jk}=\frac{\mathbf{g}_j\mathbf{b}_k}{\|\mathbf{g}_j\|}$. Since
$\mathbf{B}$ is a unitary matrix, it can have uncoupled rotations which can
grant the constructivity of interference.
Let $\mathbf{R}_{kj}$ be the rotation matrix in the
$(\mathbf{b}_k,\mathbf{b}_j)$-plane, which performs an orthogonal rotation of
the $k^{th}$ and $j^{th}$ columns of a unitary matrix while keeping
the others fixed, thus preserving unitarity. Assume without
loss of generality that $k >j $. Givens rotation matrix in the
$(\mathbf{b}_k,\mathbf{b}_j)$-plane can be defined as

\scriptsize
\begin{eqnarray}\nonumber
\label{rotation}
\hspace{-0.5cm}\begin{array}{cccc}
\small\mathbf{R}_{kj}(\alpha,\delta)=\small\small\begin{bmatrix}\begin{array}{cccccc}1&0&\hdots&0&\hdots&0\\
\vdots&\ddots&\vdots&\vdots&\vdots&\vdots\\
0&\hdots&\cos\alpha&\hdots&-\sin\alpha e^{-\delta i}&\hdots\\
\vdots&\vdots&\vdots&\vdots&\vdots\\
\vdots&\vdots&\sin\alpha e^{-\delta i}&\hdots&\cos\alpha&\hdots\\
\vdots&\hdots\hdots&\vdots&\hdots&&1\end{array}\end{bmatrix}\hspace{-0.5cm}
\end{array}
\end{eqnarray}
\normalsize
where the non trivial entries appear at the intersections of
$k^{th}$ and $j^{th}$ rows and columns. Hence, any unitary matrix
$\mathbf{B}^{'}$ can be expressed using the following parameterization
\begin{eqnarray}
\mathbf{B}^{'}=\mathbf{B}\prod^K_{j=1}\prod^K_{k=j+1}\mathbf{R}_{kj}.
\end{eqnarray}
It can be seen from the structure of the matrix in (\ref{rotation}) that
rotation in the ($\mathbf{b}_k$,$\mathbf{b}_j$)-plane does not change the directions
of the remaining beamforming vectors. Therefore, it just modifies the value
of $\xi_{kk}$, and the precoder reads as
\vspace{-0.1cm}
\begin{eqnarray}
\label{CIMRT}
\mathbf{W}_{CIMRT}&=&\mathbf{D}^H\mathbf{V}^{'}\mathbf{B}^{'}.
\end{eqnarray}
To grant constructive interference, we need to rotate the ($\mathbf{b}_k$,$\mathbf{b}_j$)-plane by formulating the rotation as a set of non-linear equations as 
\vspace{-0.15cm}
\begin{eqnarray}\nonumber
\label{rotate}
\xi^{'}_{kk}d_k&=&\xi_{kk}\cos(\alpha) d_k-\xi_{kj}\sin(\alpha) e^{-i\delta}d_j\\
\xi^{'}_{jj}d_j&=&\xi_{jk}\sin(\alpha) e^{i\delta}d_k+\xi_{jj}\cos(\alpha) d_j.
\end{eqnarray}
\begin{rem}
The rotation of $(k,j)$ plane is independent and decoupled from any other
plane. This means that any implemented rotation on this plane only affects the $(k,j)$ pair.\\  
\end{rem}
Since the set of non-linear equations can have different roots, the function
needs to be evaluated at the obtained root in order to find the optimal ones.
The optimal solution can be found when
solving for $\xi^{'}_{kk}=\sqrt{\xi^2_{kk}+\xi^2_{kj}}$, $\xi^{'}_{jj}=\sqrt{\xi^2_{jj}+\xi^2_{jk}}$.
Sometimes it is not feasible to solve for $\xi^{*}_{kk}$ and $\xi^{*}_{jj}$,
and
their values need to be reduced correspondingly. The proposed algorithm can be illustrated in the following table 
\hspace{-0.4cm}\begin{center}
    \begin{tabular}{ p{8.3cm} }
    \hline
   \footnotesize \textbf{A1}: Constructive Interference Rotation for CIMRT Algorithm\\
    \hline
    \vspace{-0.3cm}
    \begin{enumerate}
    \item    \footnotesize Find $\mathbf{P}$ assuming all the users have constructive interference.
    \item Find singular value decomposition for $\mathbf{H}=\mathbf{S}\mathbf{V}\mathbf{D}$.
    \item Construct $\mathbf{B}$, $\mathbf{G}$.
    \item for $i\in\forall(k,j)$ combinations
    \begin{enumerate}
    \item Select ($\mathbf{b}_k$, $\mathbf{b}_j$)-plane.  
    \item Find the optimal rotation parameters $\alpha$, $\delta$  for ($\mathbf{b}_k$,$\mathbf{b}_j$)
    considering $\mathbf{P}$ by solving (\ref{rotate}).
    \item Update $\mathbf{B}=\mathbf{B}\mathbf{R}_{kj}(\alpha,\delta)$.
    \end{enumerate}
    end
    \item The final precoder $\mathbf{W}=\mathbf{D}\mathbf{V}^{'}\mathbf{B}\prod^K_{j=1}\prod^K_{k=j+1}\mathbf{R}_{kj}$
    
    \end{enumerate}\\
    \hline
\end{tabular}
\end{center}

 \section{Constructive Interference for Power Minimization}
 \label{powmin}
 \subsection{Constructive Interference Power Minimization Precoding (CIPM)}

  From the definition of constructive interference, we should design the constructive interference precoders by granting that the sum of the
precoders and data symbols in the expression  forces the received
signal to the detection region of the desired symbol for each user. 
 Therefore, the optimization that
 minimizes the transmit power and grants
 the constructive reception of the transmitted data symbols can be written
 as 
\vspace{-0.2cm}
\begin{eqnarray}
\label{powccm}
\hspace{-0.2cm}\mathbf{w}_k(d_j,\mathbf{H},\boldsymbol\zeta)
\hspace{-0.1cm}&=&\arg\underset{\mathbf{w}_1,\hdots,\mathbf{w}_K}{\min}\quad \|\sum^K_{k=1}\mathbf{w}_kd_k\|^2\\\nonumber
\hspace{-0.1cm}&s.t.&\begin{cases}\mathcal{C}1:\angle(\mathbf{h}_j\sum^K_{k=1}\mathbf{w}_k
d_k)=\angle(d_j),
\forall j\in K\\
\mathcal{C}2:\|\mathbf{h}_j\sum^K_{k=1}\mathbf{w}_kd_k\|^2\geq\sigma^2\zeta_j\quad
, \forall j\in K,
\end{cases}
\end{eqnarray}\\
where $\zeta_j$ is the SNR target for the $j^{th}$ user that should
 be granted by the transmitter, and ${\boldsymbol\zeta}=[\zeta_1,\hdots,\zeta_K]$ is the vector that contains all the SNR targets.  The set of constraints $\mathcal{C}_1$
 guarantees that each user receives its corresponding data symbol $d_j$. A reformulation for the previous problem
 (\ref{powccm}) using $\mathbf{\hat{w}}_k=\mathbf{w}_kd_k$ can be expressed as

\hspace{-0.7cm} \begin{eqnarray}
\label{powccc}
\mathbf{\hat{w}}_k(d_j,\mathbf{H},\boldsymbol\zeta)
\hspace{-0.3cm}&=&\hspace{-0.1cm}\arg\underset{\mathbf{\hat{w}}_1,\hdots,\mathbf{\hat{w}}_K}{\min}\quad \|\sum^K_{k=1}\mathbf{\hat{w}}_k\|^2\\\nonumber
&s.t.&\begin{cases}\mathcal{C}1:\angle(\mathbf{h}_j\sum^K_{k=1}\mathbf{\hat{w}}_k)=\angle(d_j),
\forall j\in K\\
\mathcal{C}2:\|\mathbf{h}_j\sum^K_{k=1}\mathbf{\hat{w}}_k\|^2\geq\sigma^2\zeta_j\quad
, \forall j\in K.
\end{cases}
\end{eqnarray}\\
The replaced variables $\mathbf{\hat{w}}_k$'s indicate that it is not necessary
to send the exact symbols $d_1, \hdots,d_K$; they can be included in precoding design as long as they are
received correctly at users' terminals. Then, we design the final output
vector
$\mathbf{x}$ instead of designing the whole $\mathbf{W}$ with the assumption that
$\mathbf{d}$ is fixed.This means that the proposed methods move away from the classic approach of linear beamforming, where the precoding matrix is multiplied with the symbol vector. Instead, we adopt an approach  where the transmit signal vector is designed directly based on an optimization problem.
\smallskip
\begin{lemma}
Assuming a conventional linear precoder $\mathbf{x}=\mathbf{W}\mathbf{d}$, the transmitted signal vector $\mathbf{x}$ which minimizes the transmit power can be calculated using a unit-rank precoding matrix $\mathbf{W}$.

\begin{proof}
This can be proved by using the auxiliary variable $\mathbf{x}=\sum^K_{k=1}\mathbf{\hat{w}}_k$
and substituting it in the optimization problem (\ref{powccc}). The optimization
can be rewritten as
\begin{eqnarray}\nonumber
\label{unitrank}
\hspace{-0.1cm}\mathbf{x}(d_j,\mathbf{H},\boldsymbol\zeta)
&=&\arg\underset{\mathbf{x}}{\min}\quad \|\mathbf{x}\|^2\\\nonumber
&s.t.&\begin{cases}\mathcal{C}1:\angle(\mathbf{h}_j\mathbf{x})=\angle(d_j),
\forall j\in K\\
\mathcal{C}2:\|\mathbf{h}_j\mathbf{x}\|^2\geq\sigma^2\zeta_j\quad
, \forall j\in K.
\end{cases}
\end{eqnarray}\\
$\mathbf{x}$ is a vector, which makes the solution $\mathbf{W}=[\mathbf{w}_1,\hdots,\mathbf{w}_K]$ a unit rank as $\mathbf{w}_k=\frac{\mathbf{x}}{K}$, and the virtual input
vector $\mathbf{d}_v=\mathbf{1}^{K\times 1}$. 
\end{proof} 
\end{lemma}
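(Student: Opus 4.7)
The plan is to exploit the fact that the reformulated problem (\ref{powccc}) is written entirely in terms of the sum $\sum_{k=1}^{K}\hat{\mathbf{w}}_k$: both the objective $\|\sum_k\hat{\mathbf{w}}_k\|^2$ and the phase/magnitude constraints $\mathcal{C}1$, $\mathcal{C}2$ depend on the $\hat{\mathbf{w}}_k$'s only through this aggregate. This means the individual summands are not pinned down by the optimization, only their sum is, so there is no loss in collapsing the design variables into a single vector $\mathbf{x}\triangleq\sum_{k=1}^{K}\hat{\mathbf{w}}_k$ and optimizing over $\mathbf{x}\in\mathbb{C}^{M\times 1}$ directly.

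First I would formally introduce $\mathbf{x}$ as an auxiliary variable and substitute it into (\ref{powccc}) to obtain the equivalent program in which the objective is $\|\mathbf{x}\|^2$ and the two constraints become $\angle(\mathbf{h}_j\mathbf{x})=\angle(d_j)$ and $\|\mathbf{h}_j\mathbf{x}\|^2\geq \sigma^2\zeta_j$ for all $j$. Since $\mathbf{x}$ is a single $M$-dimensional vector, any optimizer $\mathbf{x}^\star$ of this reduced problem is also an optimizer of the original one. The next step is to exhibit a corresponding precoding matrix $\mathbf{W}$ of rank one that reproduces $\mathbf{x}^\star$ under some (virtual) data vector, thereby certifying that rank-$1$ precoders achieve the optimum of the power-minimization problem.

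For the reconstruction step I would pick $\mathbf{w}_k=\mathbf{x}^\star/K$ for every $k$ and take the virtual symbol vector $\mathbf{d}_v=\mathbf{1}^{K\times 1}$, so that
\begin{equation*}
\mathbf{W}\mathbf{d}_v=\sum_{k=1}^{K}\frac{\mathbf{x}^\star}{K}=\mathbf{x}^\star.
\end{equation*}
Because every column of $\mathbf{W}=[\mathbf{w}_1,\dots,\mathbf{w}_K]$ equals $\mathbf{x}^\star/K$, the matrix is manifestly of rank at most one, and it attains the same transmit power $\|\mathbf{W}\mathbf{d}_v\|^2=\|\mathbf{x}^\star\|^2$ while satisfying the constructive-interference and SNR constraints by construction. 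Combining this with the fact that $\mathbf{x}^\star$ is optimal in the reduced formulation yields the claim.

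The only subtle point I would flag is that the statement is about the transmit \emph{vector} $\mathbf{x}$, not about the precoder $\mathbf{W}$ being unique: indeed the reduced problem has a one-to-many lift back to $(\hat{\mathbf{w}}_1,\dots,\hat{\mathbf{w}}_K)$, so different symbol-to-stream assignments can realize the same optimal $\mathbf{x}^\star$. The rank-$1$ lift above is simply the most economical representative, and it makes explicit the paper's point that symbol-level precoding abandons the classical ``precoder $\times$ symbol vector'' structure and designs the transmit waveform $\mathbf{x}$ as a whole. I do not anticipate a real technical obstacle; the heart of the argument is simply observing the aggregate-only dependence of (\ref{powccc}) and choosing a convenient rank-$1$ inverse image of the optimal $\mathbf{x}^\star$.
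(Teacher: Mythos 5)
Your proposal is correct and follows essentially the same route as the paper: substitute the aggregate variable $\mathbf{x}=\sum_{k}\hat{\mathbf{w}}_k$ into (\ref{powccc}), optimize over the single vector $\mathbf{x}$, and lift back via $\mathbf{w}_k=\mathbf{x}^\star/K$ with $\mathbf{d}_v=\mathbf{1}^{K\times 1}$ to exhibit a rank-one precoder. Your added remarks on the aggregate-only dependence and the non-uniqueness of the lift are sound elaborations of the same argument rather than a different approach.
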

Based on Lemma 2, the differentiation between the conventional and constructive
interference precoding techniques is illustrated in Fig. (\ref{Tprecoding})-(\ref{Cprecoding}).
Fig. (\ref{Tprecoding}) shows how the conventional precoding depends only on the CSI information to optimize $\mathbf{W}$ that
carry the data symbols $\mathbf{d}$ and without any design dependency
between them. Therefore, the transmitted output vector can be formulated as $\mathbf{x}=\sum^K_{j=1}\mathbf{w}_jd_j$.
The final output vector $\mathbf{x}$ only depends on the DI and CSI and this dependence is a
linear one. On the other hand, in constructive interference precoding schemes,  the precoding directly depends on both the CSI and DI information to exploit
the interference through skipping the intermediate step (i.e. optimizing
$\mathbf{W}$) and optimizing directly the vector  $\mathbf{x}$.
In constructive interference schemes (\ref{unitrank}), the relation between
the data symbols in $\mathbf{d}$ and the final output vector $\mathbf{x}$ cannot be explicitly
described as in linear conventional precoding scheme. This can be explained by the
fact that the DI
is used to design the output vector but is not necessarily physically transmitted as in
conventional linear precoding. An implicit set of virtual data is used instead which is explained
later in this paper. 
 
\begin{figure*}[t]
\vspace{-1.5cm}
\hspace{0.2cm}
\begin{tabular}[t]{c}
\begin{minipage}{17 cm}
 \begin{center}
\hspace{-.5cm}\includegraphics[scale=0.5]{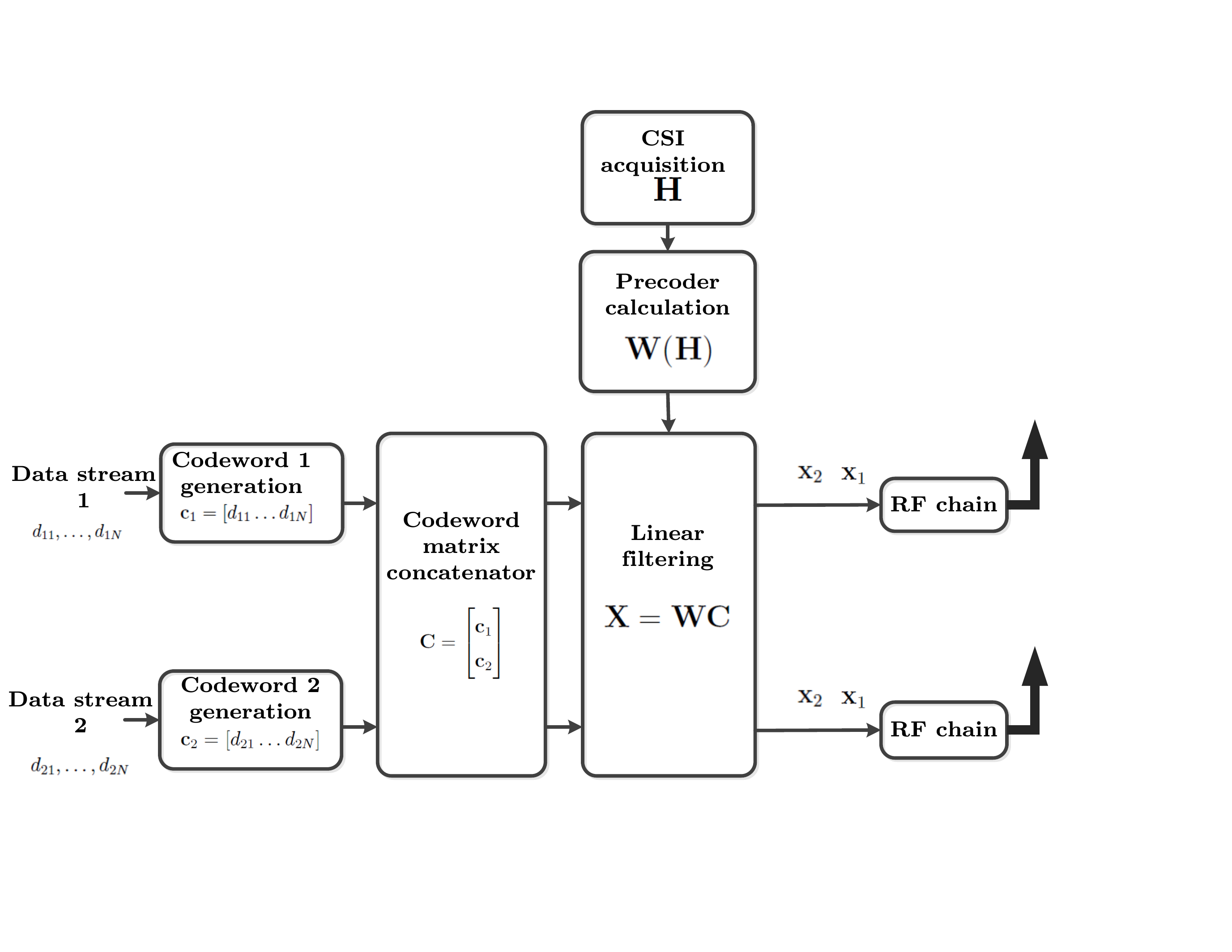}
\vspace{-1.3cm}\caption{\label{Tprecoding}Codeword-level precoding model in the conventional
MISO systems. The precoder is calculated and applied once for the whole codeword since it is independent of the actual symbols. }
\end{center}
\end{minipage}\\
\end{tabular}
\end{figure*}
\begin{figure} 
\vspace{-1.5cm}

\hspace{-2.5cm}\includegraphics[scale=0.5]{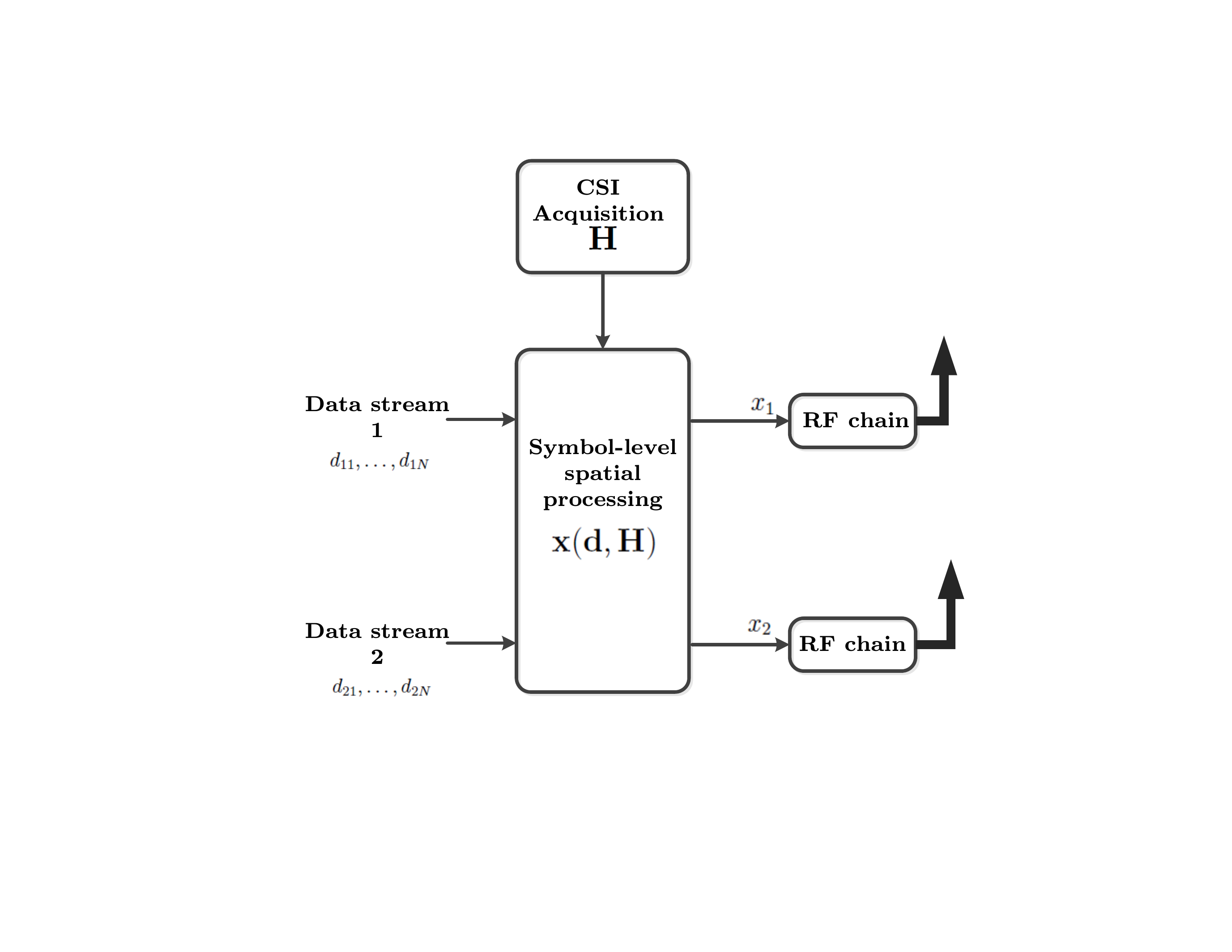}
\vspace{-3.0cm}\caption{\label{Cprecoding}Symbol-level precoding model in the constructive
interference MISO systems. The transmit vector $\mathbf{x}$ is calculated once per symbol. }
\end{figure}

 \subsection{The Relation Between Constructive Interference Precoding and Constrained
Constellation Multicast}
By taking a look at Lemma 2, the solution of the optimization problem resembles
the solution of multicast problem in which the transmitter sends a single message to
multiple users\cite{multicast}-\cite{jorswieck}. However in our problem,
we have an additional constraint $\mathcal{C}_1$ in (\ref{powccm})-(\ref{unitrank}) which guarantees that each user detects correctly its symbol based on the received signal.\\

\begin{newtheorem}{thm1}{\textbf{Theorem}} 
\begin{thm1}
\label{thmo}
The optimal precoder for CIPM 
\vspace{-0.1cm}
\begin{eqnarray}\nonumber
\label{powcd}
\hspace{-0.7cm}
\mathbf{x}_{CIPM}(\mathbf{d},\mathbf{H},\boldsymbol\zeta)=&\arg\underset{\mathbf{w}}{\min}&\quad tr(\mathbf{x}\mathbf{x}^H)\\\nonumber
&s.t.&\angle(\mathbf{h}_j\mathbf{x})=\angle(d_j)\quad\forall j\in K\\
&\quad&\mathbf{h}_j\mathbf{x}\mathbf{x}^H\mathbf{h}^H_j=\zeta_j\quad
\forall j\in K.
\end{eqnarray}
is given by $\mathbf{x}_{e}(d,\mathbf{A}(d,d_j)\mathbf{H})$ in (\ref{powcd}), where $\mathbf{A}(d,d_j)$
\vspace{-0.2cm}
\begin{eqnarray}
\mathbf{A}(d,d_j)=\begin{cases}
\exp((\angle d-\angle d_j)i), \quad j=k\\
0,\quad j\neq k.  
\end{cases}
\end{eqnarray}
\end{thm1}
\end{newtheorem}
\begin{proof} We assume that we have the following \textit{equivalent} channel as
\vspace{-0.3cm}
\begin{eqnarray}
\mathbf{H}_e=\mathbf{A}\mathbf{H}.
\end{eqnarray}
The power minimization can be rewritten by replacing $\mathbf{H}$ by its
equivalent channel $\mathbf{H}_e$ in (\ref{powcd}) as
\vspace{-0.1cm}
 \begin{eqnarray}\nonumber
\label{pow1}
&\underset{\mathbf{x}_e}{\min}&\quad (\mathbf{x}^H_e\mathbf{x}_e)\\\nonumber
&s.t.&\angle(\mathbf{h}_{e,j}\mathbf{x}_e)=\angle(d)\quad\forall j\in K\\
&\quad&\mathbf{h}_{e,j}\mathbf{x}_e\mathbf{x}^H_e\mathbf{h}^H_{e,j}=\zeta_j\quad
\forall j\in K.
\end{eqnarray} 
where $\mathbf{h}_{e,j}$ is the $j^{th}$ row of the $\mathbf{H}_e$. 
Rewriting the first constraints in (\ref{pow1}) as 
\vspace{-0.2cm}
\begin{eqnarray}\nonumber
&\quad&\angle(d-d_j)\angle(\mathbf{h}_{j}\mathbf{x}_e)=\angle(d)\\
&\equiv&\angle(\mathbf{h}_{j}\mathbf{x}_e)=\angle(d_j) \quad\forall j\in K
\end{eqnarray}
shows the equivalence  between the constrained constellation  multicast
channel and
constructive interference downlink channel. 
\end{proof}

By taking a look at (\ref{unitrank}), the objective function $\|\mathbf{x}\|^2$
is unit rank and thereby it is a convex. The convexity holds for $\mathcal{C}_2$,
however, the phase constraints in $\mathcal{C}_1$ are not convex. Therefore
a formulation for $\mathcal{C}_1$ is required.  
 We can reformulate the constraint as
 \vspace{-0.1cm}
\begin{eqnarray}
\label{powal1}
&\hspace{-2.9cm}\quad&\mathbf{x}(d_j,\mathbf{H},\boldsymbol\zeta)=
\arg\underset{\mathbf{x}}{\min}\quad \|\mathbf{x}\|^2\\\nonumber
&s.t.&\begin{cases}\mathcal{C}1:\frac{\mathbf{h}_j\mathbf{x}-(\mathbf{h}_j\mathbf{x})^H}{i({\mathbf{h}_j\mathbf{x}+(\mathbf{h}_j\mathbf{x})^H})}=\tan(d),
\forall
j\in K\\
\mathcal{C}2:\mathcal{R}\{d_j\}.\mathcal{R}\{\mathbf{h}_j\mathbf{x}\}\geq 0, \forall
j\in K\\
\mathcal{C}3:\mathcal{I}\{d_j\}.\mathcal{I}\{\mathbf{h}_j\mathbf{x}\}\geq 0, \forall
j\in K\\
\mathcal{C}4:\|\mathbf{h}_j\mathbf{x}\|^2\geq \sigma^2\zeta_j,\forall j\in K.
\end{cases}
\end{eqnarray}
 The minimum transmit power in (\ref{powccm})-(\ref{unitrank})
 occurs when the inequality constraints are replaced by equality (i.e. all
 users should achieve their target threshold SNR). A final formulation can be expressed as 
\vspace{-0.1cm} 
\begin{eqnarray}
\label{powa}
\hspace{-0.9cm}&\quad&\arg\underset{\mathbf{x}}{\min}\quad \|\mathbf{x}\|^2\\\nonumber
&s.t.&\begin{cases}\mathcal{C}1:\frac{\mathbf{h}_j\mathbf{x}-(\mathbf{h}_j\mathbf{x})^H}{2i}=\sigma\sqrt{\zeta_j}\mathcal{I}\{d\},
\forall
j\in K\\
\mathcal{C}2:\frac{\mathbf{h}_j\mathbf{x}+(\mathbf{h}_j\mathbf{x})^H}{2}=\sigma\sqrt{\zeta_j}\mathcal{R}\{d_j\},\forall
j\in K.\end{cases}
\end{eqnarray}
It can be viewed that the constraints in (\ref{powal1}) are turned from inequality
constraints to equality constraint (\ref{powa}) due to signal aligning requirements. The Lagrangian function can be derived as follows
\vspace{-0.2cm}
\begin{eqnarray}\nonumber
\hspace{-1.9cm}&\mathcal{L}&(\mathbf{x})=\|\mathbf{x}\|^2\\\nonumber
&+&\sum_j{\mu_j}\bigg(-0.5i\small(\mathbf{h}_j\mathbf{x}-\mathbf{x}^H\mathbf{h}^H_j\small)-\sqrt{\zeta_{j}}\mathcal{I}\{d_j\}\bigg)\\\nonumber
&+&\sum_j{\alpha_j}\bigg(0.5\small(\mathbf{h}_j\mathbf{x}+\mathbf{x}^H\mathbf{h}^H_j\small)
-\sqrt{\zeta_{j}}\mathcal{R}\{d_j\}\bigg)\\
\end{eqnarray}
where $\mu_j$ and $\alpha_j$ are the Lagrangian dual variables. The derivative for the Lagrangian function can be written as
\vspace{-0.1cm}
\begin{eqnarray}
\frac{d\mathcal{L}(\mathbf{x})}{d\mathbf{x}^*}=\mathbf{x}+0.5i\sum_j\mu_j\mathbf{h}^H_j
+0.5\sum_j\alpha_j\mathbf{h}^H_j
\end{eqnarray}
\vspace{-0.15cm}
By equating this term to zero, $\mathbf{x}_i$ can be written as
\hspace{-0.2cm}\begin{eqnarray}\nonumber
\label{CIPM}
\hspace{-0.5cm}\mathbf{x}&=&-0.5i\sum^K_{j=1}\mu_j\mathbf{h}^H_j
-0.5\sum_j\alpha_j\mathbf{h}^H_j\\
 &\equiv&\sum^K_{j=1}\nu_j\mathbf{h}^H_j,\forall
i\in K
\end{eqnarray}
where $\nu_j\in \mathbb{C}=-0.5i\mu_j-0.5\alpha_j$. The optimal values of the Lagrangian variables $\mu_j$ and $\alpha_j$ can
be found by substituting $\mathbf{w}$ in the constraints (\ref{powa}) which result in solving the set of $2K$ equations (\ref{setoo}).
 The final constrained 
constellation multicast
precoder can be found by substituting all $\mu_j$ and $\alpha_j$ in (\ref{CIPM}). 
 
\begin{figure*}[t]
\vspace{-0.2cm}
\hspace{0.2cm}
\begin{tabular}[t]{c}
\begin{minipage}{17 cm}
 \begin{eqnarray}
\label{multicasteq}
\begin{array}{cccc}
\label{setoo}
0.5K\|\mathbf{h}_1\|(\sum_k(-\mu_k+\alpha_ki)\|\mathbf{h}_k\|\rho_{1k}&-&\sum_k(-\mu_k+\alpha_ki)\|\mathbf{h}_k\|\rho^{*}_{1k})=\sqrt{\zeta^{}_{1}}\mathcal{I}(d_1)\\
0.5K\|\mathbf{h}_1\|(\sum_k(-\mu_ki-\alpha_k)\|\mathbf{h}_k\|\rho_{1k}&+&\sum_k(-\mu_ki-\alpha_k)\|\mathbf{h}_k\|\rho^{*}_{1k})=\sqrt{\zeta^{}_{1}}\mathcal{R}(d_1)\\
\quad&\vdots&\\
0.5K\|\mathbf{h}_K\|(\sum_k(-\mu_k+\alpha_ki)\|\mathbf{h}_k\|\rho_{Kk}&-&\sum_k(-\mu_k+\alpha_ki)\|\mathbf{h}_k\|\rho^{*}_{Kk})=\sqrt{\zeta_{K}}\mathcal{I}(d_K)\\
0.5K\|\mathbf{h}_K\|(\sum_k(-\mu_ki-\alpha_k)\|\mathbf{h}_k\|\rho_{Kk}&+&\sum_k(-\mu_ki-\alpha_k)\|\mathbf{h}_k\|\rho^{*}_{Kk})=\sqrt{\zeta_{K}}\mathcal{R}(d_K)\\
\end{array}
\end{eqnarray}
\end{minipage}\\
\hline
\hline
\end{tabular}
\end{figure*}
\begin{cor}
The   CI precoding for power minimization  $\mathbf{x}_{CIPM}$ as well as
constrained constellation multicast precoding  must span the subspaces of each user's channel.\smallskip
\end{cor}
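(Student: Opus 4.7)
The plan is to read off the conclusion essentially for free from the KKT analysis that already produced (\ref{CIPM}). Recall that the Lagrangian stationarity condition gave $\mathbf{x}=\sum_{j=1}^K \nu_j \mathbf{h}_j^H$ with $\nu_j=-0.5 i\mu_j-0.5\alpha_j$, so by construction any optimizer of (\ref{powa}) lies in $\mathrm{span}\{\mathbf{h}_1^H,\dots,\mathbf{h}_K^H\}$. Thus for $\mathbf{x}_{CIPM}$ the statement is a direct corollary of the derivation above.

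To make the argument robust and to cover the constrained constellation multicast case simultaneously, I would also include a short independent \emph{orthogonal decomposition} proof. Let $\mathcal{S}=\mathrm{span}\{\mathbf{h}_1^H,\dots,\mathbf{h}_K^H\}$ and decompose any candidate transmit vector as $\mathbf{x}=\mathbf{x}_\parallel+\mathbf{x}_\perp$ with $\mathbf{x}_\parallel\in\mathcal{S}$ and $\mathbf{x}_\perp\in\mathcal{S}^{\perp}$. For every $j$, $\mathbf{h}_j\mathbf{x}_\perp=0$, hence $\mathbf{h}_j\mathbf{x}=\mathbf{h}_j\mathbf{x}_\parallel$, so both the phase-alignment constraints $\mathcal{C}_1$ and the SNR constraints $\mathcal{C}_2$ in (\ref{powal1})--(\ref{powa}) depend only on $\mathbf{x}_\parallel$. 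The objective splits as $\|\mathbf{x}\|^2=\|\mathbf{x}_\parallel\|^2+\|\mathbf{x}_\perp\|^2\geq\|\mathbf{x}_\parallel\|^2$, with equality iff $\mathbf{x}_\perp=\mathbf{0}$. Hence any optimizer must satisfy $\mathbf{x}_\perp=\mathbf{0}$, i.e.\ $\mathbf{x}\in\mathcal{S}$.

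For the constrained constellation multicast precoder, I would invoke Theorem~\ref{thmo}: the problem is equivalent to the standard multicast problem on the rotated channel $\mathbf{H}_e=\mathbf{A}\mathbf{H}$. Since $\mathbf{A}$ is diagonal with unit-modulus entries $e^{i(\angle d-\angle d_j)}$, each equivalent row satisfies $\mathbf{h}_{e,j}=e^{i(\angle d-\angle d_j)}\mathbf{h}_j$, so $\mathrm{span}\{\mathbf{h}_{e,1}^H,\dots,\mathbf{h}_{e,K}^H\}=\mathcal{S}$. Applying the same orthogonal-decomposition argument to the equivalent problem then yields that the multicast optimizer also lies in $\mathcal{S}$, completing both halves of the corollary.

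There is no real obstacle here: the statement is almost immediate from (\ref{CIPM}), and the orthogonal-decomposition argument is standard. The only point worth being careful about is ensuring that \emph{all} constraints of (\ref{powa}) are invariant under adding a component in $\mathcal{S}^\perp$; this holds because every constraint is expressed purely through inner products $\mathbf{h}_j\mathbf{x}$, which annihilate $\mathbf{x}_\perp$. A secondary subtlety is verifying that $\mathbf{A}$ in Theorem~\ref{thmo} preserves row-span, but since $\mathbf{A}$ is a diagonal unitary this is trivial.
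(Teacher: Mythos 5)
Your first paragraph is exactly the paper's (implicit) argument: the corollary is stated immediately after the stationarity condition $\mathbf{x}=\sum_{j=1}^{K}\nu_j\mathbf{h}_j^H$ in (\ref{CIPM}), and the paper offers no further proof, so reading the span property off the KKT condition is the intended route. Your additional orthogonal-decomposition argument is a genuinely different and arguably better justification: it does not rely on stationarity of a Lagrangian for a problem with non-convex equality (phase) constraints, but only on the elementary facts that every constraint in (\ref{powal1})--(\ref{powa}) is a function of the inner products $\mathbf{h}_j\mathbf{x}$ (hence invariant under adding $\mathbf{x}_\perp\in\mathcal{S}^{\perp}$) and that $\|\mathbf{x}\|^2=\|\mathbf{x}_\parallel\|^2+\|\mathbf{x}_\perp\|^2$ forces $\mathbf{x}_\perp=\mathbf{0}$ at any optimum. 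This also cleanly covers the multicast half of the claim via the unitary diagonal rotation $\mathbf{A}$ preserving the row span, which the paper leaves unstated. The only caveat is interpretive: if the corollary is read as asserting that $\mathbf{x}_{CIPM}$ has a \emph{nonzero} component along every $\mathbf{h}_j^H$ (rather than merely lying in $\mathrm{span}\{\mathbf{h}_1^H,\dots,\mathbf{h}_K^H\}$), neither your argument nor the paper's establishes that without additionally invoking that each SNR constraint with $\zeta_j>0$ must be active; your reading is the one consistent with the paper's derivation, so this is not a gap in your proof so much as an ambiguity in the statement.
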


It can be noted from the formulation of $\mathbf{x}_{CIPM}$ that BS should use
the same precoder for all users. This result resembles the multicast approach
in which the BS wants to deliver the same message to all users\cite{multicast}-\cite{multicast-jindal}.
However in multicast systems, a different symbol should be detected correctly
at each user.
\smallskip

Using (\ref{CIPM}), we can rewrite the received signal at $j^{th}$ receiver
as 
\vspace{-0.1cm} 
\begin{eqnarray}\nonumber
\hspace{-05cm}\vspace{-0.2cm}
\label{alignment} 
y_j&=&\mathbf{h}_j\mathbf{x}+z_j=\mathbf{h}_j\sum^K_{k=1}\nu_k\mathbf{h}^H_k+z_j\\\vspace{-0.3cm}\nonumber
&\label{cm1}\equiv&\mathbf{h}_j\begin{bmatrix}|\nu_1|*\mathbf{h}^H_1\quad \hdots\quad|\nu_K|*\mathbf{h}^H_K\end{bmatrix}
\begin{bmatrix}d*1\angle(\nu_1)\\
\vdots\\\nonumber
d*1\angle(\nu_K)\end{bmatrix}+z_j.\\\nonumber
\end{eqnarray}
From (\ref{cm1}), the constellation constrained multicast can be formulated as a constructive interference downlink channel with set of precoders $\mathbf{h}^H_1,
\hdots,\mathbf{h}^H_K$, where each one of these precoders is allocated with power $|\nu_k|$ and associated with the symbol $d*1\angle \nu_k$.\smallskip

\begin{cor}
\label{eq}
The solution of problem $\mathbf{x}_{CIPM}$ with uniformly scaled SINR constraints is given simply by scaling the output vector of the original problem as follows:
\begin{eqnarray}\nonumber
\mathbf{x}_{CIPM}(\mathbf{d},\mathbf{H},n\boldsymbol\zeta)=\sqrt{n}\mathbf{x}_{CIPM}(\mathbf{d},\mathbf{H},\boldsymbol\zeta)
\end{eqnarray}
where $n\in \mathbb{R}^+$.\smallskip   
\end{cor}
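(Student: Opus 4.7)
The plan is to exploit the homogeneity of the reformulated constraints (\ref{powa}) in the SINR targets. First I would observe that the two real-valued equalities $\mathcal{C}_1$ and $\mathcal{C}_2$ together match the imaginary and real parts of a single complex-linear constraint, namely $\mathbf{h}_j\mathbf{x}=\sigma\sqrt{\zeta_j}\,d_j$ for every $j\in K$. The left-hand side of this condition is linear in $\mathbf{x}$, while the right-hand side is homogeneous of degree $1/2$ in $\zeta_j$. Consequently, replacing $\boldsymbol\zeta$ by $n\boldsymbol\zeta$ multiplies every right-hand side by exactly $\sqrt{n}$, so the feasible set of the scaled problem is precisely $\sqrt{n}$ times the feasible set of the original problem.

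Next I would transfer optimality through this bijection. Let $\mathbf{x}^\star=\mathbf{x}_{CIPM}(\mathbf{d},\mathbf{H},\boldsymbol\zeta)$. By the linearity observation above, $\sqrt{n}\,\mathbf{x}^\star$ is feasible for the problem with targets $n\boldsymbol\zeta$, and its objective value equals $n\|\mathbf{x}^\star\|^2$. To rule out a strictly better candidate $\tilde{\mathbf{x}}$ for the scaled problem with $\|\tilde{\mathbf{x}}\|^2<n\|\mathbf{x}^\star\|^2$, I would consider the rescaled vector $\tilde{\mathbf{x}}/\sqrt{n}$: it is feasible for the original problem with targets $\boldsymbol\zeta$, and its objective value $\|\tilde{\mathbf{x}}\|^2/n$ is strictly smaller than $\|\mathbf{x}^\star\|^2$, contradicting the optimality of $\mathbf{x}^\star$. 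This contradiction establishes $\mathbf{x}_{CIPM}(\mathbf{d},\mathbf{H},n\boldsymbol\zeta)=\sqrt{n}\,\mathbf{x}^\star$.

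There is no substantive obstacle here; the argument is essentially the standard scaling/homogeneity lemma for quadratic programs with linear equality constraints whose right-hand sides depend monotonically on a common parameter. The only nontrivial bookkeeping is verifying that the pair $(\mathcal{C}_1,\mathcal{C}_2)$ really collapses to the single affine equality $\mathbf{h}_j\mathbf{x}=\sigma\sqrt{\zeta_j}\,d_j$, rather than merely fixing the angle of $\mathbf{h}_j\mathbf{x}$ to $\angle d_j$ as in the original formulation (\ref{powccm}); this collapse relies on the fact that, at the optimum, the inequality $\mathcal{C}_2$ of (\ref{powal1}) is active (as already noted in the derivation preceding (\ref{powa})), which promotes the magnitude constraint to an equality simultaneously with the phase alignment. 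Once this is in place, the uniqueness of the optimum (from strict convexity of $\|\mathbf{x}\|^2$ on the affine feasible set) yields the claimed identity immediately.
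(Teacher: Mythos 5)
Your proof is correct, but it takes a different route from the paper's. The paper argues through the Lagrangian machinery: it uses the stationarity condition $\mathbf{x}=\sum_j\nu_j\mathbf{h}_j^H$ and the resulting $2K$-equation linear system (\ref{setoo}), observing that replacing each $\zeta_j$ by $n\zeta_j$ scales the right-hand sides, hence the dual variables $\mu_j,\alpha_j$, hence $\mathbf{x}$, all by $\sqrt{n}$ (after first noting that normalization preserves the angles $\angle(\mathbf{h}_k\mathbf{x})$, so the direction of the optimal vector is unchanged). You instead argue purely at the primal level: the equality constraints collapse to the affine system $\mathbf{h}_j\mathbf{x}=\sigma\sqrt{\zeta_j}d_j$, whose right-hand side is homogeneous of degree $1/2$ in the targets while the left-hand side is linear in $\mathbf{x}$, so the feasible sets of the original and scaled problems are in exact $\sqrt{n}$-bijection, and the usual contradiction argument transfers optimality (with uniqueness from minimizing a strictly convex norm over an affine set). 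Your version is more elementary and more robust: it does not rely on the solvability or structure of (\ref{setoo}) nor on the explicit dual characterization, only on well-posedness of the feasibility system. What the paper's route buys in exchange is an explicit statement of how the dual variables themselves scale, which is consistent with the closed form (\ref{CIPM}) that the surrounding text uses. Your flagged bookkeeping point --- that the inequality $\mathcal{C}_2$ of (\ref{powal1}) must be active at the optimum to justify the equality form (\ref{powa}) --- is handled in the paper by assertion rather than proof, but since the corollary is stated for the formulation in (\ref{powcd}), which already imposes equality power constraints, your argument applies directly and the point is moot.
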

\begin{proof}
We define the normalized precoder $\mathbf{\hat{x}}_{CIPM}$ equals to $\frac{\mathbf{x}_{CIPM}}{\|\mathbf{x}_{CIPM}\|}$.
For any $\mathbf{x}_{CIPM}$, $\angle(\mathbf{h}_k\mathbf{x}_{CIPM})=\angle(\mathbf{h}_k\mathbf{\hat{x}}_{CIPM}),
\forall k\in K$. Therefore, all users can receive their target data symbols
$\mathbf{d}$ scaled to a certain SNR value. This implies that scaling uniformly all
users' SNR targets does change $\mathbf{\hat{x}}_{CIPM}$. Using
the simultaneous set of equations (\ref{setoo}), we can replace each $\zeta_j$
by $n\zeta_j$. This multiplies each value of $\mu_j$, $\alpha_j$ by
$\sqrt{n}$. As a consequence, a scaling factor of $\sqrt{n}$ is multiplied
with the original output vector $\mathbf{x}$ which proves the corollary.
\end{proof}
\subsection{Constructive Interference Power minimization bounds}
In order to assess the performance of the proposed algorithm, we mention
two theoretical upper bound as follows

\subsubsection{Genie aided upper bound}
This bound occurs when all multiuser transmissions are constructively
interfering by nature and without the need to optimize the output vector. The minimum transmit power for a system that exploits the constructive interference
on symbol basis can be found by the following theoretical bound\smallskip
 
\begin{thm1}
The genie-aided minimum transmit power in the downlink of multiuser MISO   system can be found by solving the following optimization
\begin{eqnarray}\nonumber
\label{pr}
\hspace{-0.1cm}{P}_{min}&=&\arg\underset{p_1,\hdots,p_K}{\min}\quad\sum^K_{k=1}p_k\\\nonumber
&s.t.&\|\mathbf{g}_k\|^2(|\xi_{kk}|^2{p_k}+\sum^K_{j=1,j\neq k}{p_j}|\xi_{kj}|^2)\geq{\zeta_k},\forall
k\in K.\\
\end{eqnarray}
\end{thm1}

\begin{proof}According to (\ref{rot}), the bound in (\ref{pr}) can be found if all users face a constructive interference
with respect to the multiuser transmissions of all other streams $\angle(\xi_{jk}d_j)=\angle
d_k,\forall k,\forall j$.
\end{proof}
This bound can be mathematically found by solving the problem (\ref{pr}) using linear programming
techniques\cite{boyd}. \\
\subsubsection{Optimal Multicast} 
Based on theorem (2), a theoretical upperbound can be characterized. This bound occurs if we drop the phase alignment
constraint $\mathcal{C}_1$. The intuition of
using this technique is the complete correlation among the information that needs to be communicated (i.e. same symbol for all users). The optimal input covariance for power minimization in multicast system can
be found as a solution of the following optimization
\vspace{-0.05cm} 
\begin{eqnarray}
\label{powm1}
&\underset{\mathbf{Q}:\mathbf{Q}\succeq 0}{\min}&\quad tr(\mathbf{Q})\quad s.t.\quad\mathbf{h}_j\mathbf{Q}\mathbf{h}^H_j\geq\zeta_j\quad,\forall j\in K.
\end{eqnarray}
This problem is thoroughly solved in \cite{multicast}. A tighter upperbound
can be found by imposing a unit rank constraint on $\mathbf{Q}$\cite{multicast-jindal},
to allow the comparison with the unit rank transmit power minimization constructive interference
precoding 
\begin{eqnarray}
\label{unitrankm}
\underset{\mathbf{Q}:\mathbf{Q}\succeq 0, \text{Rank}(\mathbf{Q})=1}{\min} tr(\mathbf{Q})\quad s.t.\quad\mathbf{h}_j\mathbf{Q}\mathbf{h}^H_j\geq\zeta_j\quad,\forall j\in K
\end{eqnarray}
Eq. (\ref{unitrankm}) presents a tighter upper bound in comparison (\ref{powm1}).
It assumes a unit rank approximation of (\ref{powm1}). 
\section{Weighted Max Min SINR Algorithm for Constructive Interference Precoding (CIMM)}\label{maxmins}
The weighted max-min SINR beamforming aims at improving the relative fairness in the system by maximizing
the worst user SINR. This problem has been studied in different frameworks
 such as multicast \cite{multicast}, and downlink transmissions\cite{shitz}. In
 \cite{multicast}, the authors have solved this problem by finding the relation
 between the min-pwr problem and max-min problem and formulating both problem
 as convex optimization ones. On the other hand, the authors of \cite{shitz} have solved the problem using the bisection technique. In this
 work, we exploit the constructive interference to enhance the user fairness in terms of weighted SNR. The challenging aspect is the additional constraints which guarantee the data have been detected correctly at
 the receivers. The constructive interference max-min problem can be formulated as
\begin{eqnarray}
\label{maxmin}
\mathbf{w}_k=&\underset{\mathbf{w}_k}{\max}\underset{j}{\min}&\Big\{\frac{\|\mathbf{h}_j\sum^K_{k=1}\mathbf{w}_kd_k\|^2}{r_j}\Big\}^K_{i=1}\\\nonumber
&{s.t.}&\begin{cases}\mathcal{C}1:\|\sum^K_{k=1}\mathbf{w}_kd_k\|^2\leq P\\\nonumber
\mathcal{C}2:\angle(\mathbf{h}_j\sum^K_{k=1}\mathbf{w}_kd_k)=\angle(d_j),\quad\forall j\in K.\\
\end{cases}\end{eqnarray}
where $r_i$ denotes the requested SNR target for the $i^{th}$ user. If we
denote $\mathbf{q}=\sum^K_{j=1}\mathbf{w}_jd_j$, the previous optimization
can be formulated as 
\begin{eqnarray}
\label{maxminq}
\mathbf{q}(\mathbf{d},\mathbf{H},\mathbf{r})=&\underset{\mathbf{q}}{\max}\underset{j}{\min}&\Big\{\frac{\|\mathbf{h}_j\mathbf{q}\|^2}{r_j}\Big\}^K_{i=1}\\\nonumber
&{s.t.}&\begin{cases}\mathcal{C}1:\|\mathbf{q}\|^2\leq P\\\nonumber
\mathcal{C}2:\angle(\mathbf{h}_j\mathbf{q})=\angle(d_j),\quad\forall j\in K\\
\end{cases}\end{eqnarray}
where $\mathbf{r}$ is the vector that contains all the weights $r_i$. In the following, it is shown
that the optimal output vector is a scaled version of the min-pwr solution in (\ref{powccm})\cite{multicast}. The weighted maximum minimum SINR problem has been solved using bisection method over  $t\in[0,1]$\cite{shitz}. \vspace{-0.5cm}
\subsection{Max-min SINR and min-pwr relation}
\begin{lemma} The relationship between min power and max-min problem can
be described  as $\mathbf{q}(\mathbf{d},\mathbf{H},\mathbf{r})=\mathbf{x}(\mathbf{d},\mathbf{H},{t}^{*}\mathbf{r})$.\smallskip
\begin{proof}
The problem (\ref{maxminq}) can be formulated
 \hspace{-0.4cm}\begin{eqnarray}
&\underset{t,\mathbf{q}}{\max}&\quad t\\\nonumber
&s.t.&\begin{cases}\mathcal{C}1:\|\mathbf{q}\|^2\leq P\\
\mathcal{C}2:\frac{\mathbf{h}_j\mathbf{q}-(\mathbf{h}_j\mathbf{q})^H}{i({\mathbf{h}_j\mathbf{q}+(\mathbf{h}_j\mathbf{q})^H})}=\tan(\angle
d_j),
\forall
j\in K\\
\mathcal{C}3:\mathcal{R}\{d_j\}.\mathcal{R}\{\mathbf{h}_j\mathbf{q}\}\geq 0, \forall
j\in K\\
\mathcal{C}4:\mathcal{I}\{d_j\}.\mathcal{I}\{\mathbf{h}_j\mathbf{q}\}\geq 0, \forall
j\in K\\
\mathcal{C}5:\|\mathbf{h}_j\mathbf{q}\|^2\geq R_jt,\forall j\in K.
\end{cases}
\end{eqnarray}
The optimal value of $t$  denoted by $t^{*}$ can be found by solving the min-pwr.
\end{proof}
\end{lemma}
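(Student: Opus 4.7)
The plan is to exploit the scaling property already established in Corollary \ref{eq} (the CIPM invariance under uniform SNR scaling) to reduce the max-min problem to a single power-minimization instance. The key observation is that both the phase-alignment constraint $\mathcal{C}2$ in (\ref{maxminq}) and the objective ratio $\|\mathbf{h}_j\mathbf{q}\|^2/r_j$ are invariant (or scale uniformly) under positive real rescaling of $\mathbf{q}$; only the power budget $\mathcal{C}1$ couples the problem. This is exactly the structural setup where a power/SINR inversion argument applies.

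First I would rewrite (\ref{maxminq}) in epigraph form (as the excerpt already begins), introducing the auxiliary variable $t$ and the constraints $\|\mathbf{h}_j\mathbf{q}\|^2\geq r_j t\,\sigma^2$ for all $j$ together with the phase constraints $\mathcal{C}2$--$\mathcal{C}4$ and the power cap $\|\mathbf{q}\|^2\leq P$. For any fixed value of $t$, the feasibility question ``does there exist $\mathbf{q}$ meeting the phase constraints with $\|\mathbf{h}_j\mathbf{q}\|^2\geq r_j t\,\sigma^2$ and $\|\mathbf{q}\|^2\leq P$?'' is equivalent to the question ``is the optimal value of $\mathbf{x}_{CIPM}(\mathbf{d},\mathbf{H},t\mathbf{r})$ at most $P$?'' By Corollary \ref{eq}, the optimal transmit power of CIPM scales linearly with the target vector, i.e. $\|\mathbf{x}_{CIPM}(\mathbf{d},\mathbf{H},t\mathbf{r})\|^2 = t\,\|\mathbf{x}_{CIPM}(\mathbf{d},\mathbf{H},\mathbf{r})\|^2$. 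Hence feasibility is equivalent to $t \leq P/\|\mathbf{x}_{CIPM}(\mathbf{d},\mathbf{H},\mathbf{r})\|^2$.

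The largest feasible $t$ is therefore $t^{*}=P/\|\mathbf{x}_{CIPM}(\mathbf{d},\mathbf{H},\mathbf{r})\|^2$, and any maximizer $\mathbf{q}^{\star}$ must saturate the power constraint (otherwise we could scale $\mathbf{q}^{\star}$ up by a factor slightly greater than one, which preserves the phase constraint and strictly increases every $\|\mathbf{h}_j\mathbf{q}\|^2/r_j$, contradicting optimality). Applying Corollary \ref{eq} once more, $\mathbf{x}(\mathbf{d},\mathbf{H},t^{*}\mathbf{r}) = \sqrt{t^{*}}\,\mathbf{x}(\mathbf{d},\mathbf{H},\mathbf{r})$, which has squared norm exactly $P$ and satisfies $\|\mathbf{h}_j\mathbf{x}(\mathbf{d},\mathbf{H},t^{*}\mathbf{r})\|^2 = t^{*} r_j\,\sigma^2$ for every $j$, so all $K$ weighted SINR values are equalized to $t^{*}$. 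This certifies that $\mathbf{x}(\mathbf{d},\mathbf{H},t^{*}\mathbf{r})$ is feasible for (\ref{maxminq}), attains the upper bound $t^{*}$, and hence is optimal: $\mathbf{q}(\mathbf{d},\mathbf{H},\mathbf{r})=\mathbf{x}(\mathbf{d},\mathbf{H},t^{*}\mathbf{r})$.

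The main obstacle is the conceptual step of showing the max-min optimizer must drive all weighted SINRs to the same value and saturate the power budget; without this, the scaling identity only yields one feasible candidate rather than the actual maximizer. I would handle it by the standard two-part contradiction argument: if power is slack, uniform rescaling strictly improves every SINR; if some SINR is strictly above $t^{*}$ while another equals $t^{*}$, the min is still $t^{*}$ so this does not immediately contradict optimality, but combined with power saturation and the monotone dependence of the CIPM minimum power on each $r_j$ (a direct consequence of the KKT system (\ref{setoo})), uniqueness up to this equalized, power-saturated solution follows. The phase constraints being homogeneous of degree zero in $\mathbf{q}$ is what makes the whole reduction go through, and is where the constructive-interference setting differs only cosmetically from the classical multicast max-min/min-power duality of \cite{multicast}.
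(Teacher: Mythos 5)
Your proof is correct and follows essentially the same route as the paper: both cast (\ref{maxminq}) in epigraph form and then identify the optimal $t^{*}$ by solving the min-power problem. You merely make explicit what the paper leaves implicit --- that Corollary \ref{eq} yields the closed form $t^{*}=P/\|\mathbf{x}_{CIPM}(\mathbf{d},\mathbf{H},\mathbf{r})\|^{2}$ and that any maximizer must saturate the power budget --- so this is a tightening of the same argument rather than a different one.
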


Thus, the max-min SINR solution is a scaled version of min power
solution, which means that the system designer needs to find the optimal
value of $t^*$ to solve the max-min problem. In the next section, we propose
a simple method that can find this parameter influenced by the literature
\cite{shitz}.
\vspace{-0.5cm}
\subsection{Max-min SINR Constructive Interference Precoding}

In comparison with (\ref{maxmin}), we have additional $3K$ constraints that
limit the system performance. The problem can be formulated as
\begin{eqnarray}
\label{cmmaxmin}
\hspace{-0.1cm}&\underset{t,\mathbf{q}}{\max}&\quad t\\\nonumber
&s.t.&\begin{cases}\mathcal{C}1:\|\mathbf{q}\|^2= P\\
\mathcal{C}2:\frac{\mathbf{h}_j{\mathbf{q}}-(\mathbf{h}_j{\mathbf{q})}^H}{{i({\mathbf{h}_j{\mathbf{q}}+(\mathbf{h}_j{\mathbf{q}})^H})}}=\tan(\angle
d_j),
\forall
j\in K\\
\mathcal{C}3:\frac{(d+d^*)}{2}.\frac{\mathbf{h}_j\mathbf{q}+(\mathbf{h}_j\mathbf{q})^H}{2}\geq 0, \forall
j\in K\\
\mathcal{C}4:\frac{(d-d^*)}{2i}.\frac{\mathbf{h}_j\mathbf{q}-(\mathbf{h}_j\mathbf{q})^H}{2i}\geq 0, \forall
j\in K\\
\mathcal{C}5:\|\mathbf{h}_j\mathbf{q}\|^2\geq r_jt,\forall j\in K.
\end{cases}
\end{eqnarray}\smallskip

A solution for (\ref{cmmaxmin}) can be found in the same fashion by using
the bisection method as \cite{shitz} and can be summarized as\\

\smallskip
\begin{center}
\hspace{-0.5cm}\begin{tabular}{p{8.3cm}}
\hline
\textbf{A2}: Bisection for max-min SINR for CI precoding (CIMM)\\
\hline
\hspace{0.2cm} $m_1\rightarrow 0$\\
\hspace{0.2cm} $m_2\rightarrow 1$\\
\hspace{0.2cm} Repeat\\
\hspace{0.2cm} set $t_m=\frac{m_1+m_2}{2}$\\
\hspace{0.2cm}$\mathbf{q}(t_m)=\text{find}\quad\mathbf{x}_{CIPM}(\mathbf{d},\mathbf{H}, t_m\mathbf{r})$\
\hspace{0.1cm}set $\hat{P}=\|\mathbf{x}_{CIPM}\|^2$\\
\hspace{0.2cm}if $\hat{P}\leq P$\\
\hspace{0.2cm}then\quad $t_1 \rightarrow t_m$\\
\hspace{0.2cm}else\quad$t_2 \rightarrow t_m$\\
Until $|\hat{P}-P|\leq \delta$\\
Return $t_m$\\
\hline\\
\end{tabular}
\end{center}
\section{Weighted Sum Rate Maximization Algorithms for constructive interference
Precoding (CISR)}\label{wsr}
The sum rate problem of the multiuser downlink of multiple antennas for user-level
precoding has been investigated in the literature \cite{coordinated}-\cite{bjornson}. The authors in \cite{coordinated} prove that the sum rate problem is NP
hard. However, a simpler
solution for the sum rate problem is characterized in \cite{bjornson} by
rotating the MRTs of each user's channel to reduce the amount of the created
interference on other users' transmissions. On the other hand,
the weighted sum rate optimization in single group multicast scenarios is studied
\cite{multicastc}, which tries to design closed form precoders at different
high SNR scenarios and proposes an iterative algorithm with low computation complexity for general SNR case. Furthermore, heuristic solutions for sum
rate maximization of group
multicast precoding with per-antenna power constraint are proposed
in \cite{dimitris_multi}.

In this work, we take into the account that the interference  can be exploited
among the different multiuser data streams. This requires that the sum rate
problem should be formulated to take into consideration this new feature. The weighted sum rate maximization with a unit rank assumption for the precoding
matrix can be written as\footnote{For the sum rate problrm, it should be noted that the optimal solution is not necessarily unit rank, but we employ this assumption to enable tractable heuristic solutions}
\vspace{-0.1cm}
\begin{eqnarray}\nonumber
\label{SR}
&\underset{\mathbf{q}}{\max}&\quad\sum^K_{j=1}\phi_j\log_2(1+\frac{\|\mathbf{h}_j\mathbf{q}\|^2}{\sigma^2})\\
&s.t.&\begin{cases}\mathcal{C}_1:\angle\mathbf{h}_j\mathbf{q}=\angle d_j \quad\forall j\in K.\\
\mathcal{C}_2:\|\mathbf{q}\|^2\leq P.
\end{cases}\end{eqnarray}
where $\phi_j$ is the weight related to the $j^{th}$ user.
The optimization can be
formulated as 
\begin{eqnarray}
\label{sumrate}
\hspace{-0.9cm}&\underset{\mathbf{q}}{\max}&\quad\sum^K_{j=1}\phi_j\log_2(1+\frac{\|\mathbf{h}_j\mathbf{q}\|^2}{\sigma^2})\\\nonumber
&s.t.&\begin{cases}\mathcal{C}_1:\mathcal{I}\{\mathbf{h}_j\mathbf{q}\}=i\tan(\angle d_j)\mathcal{R}\{\mathbf{h}_j\mathbf{q}\} \quad\forall j\in
K,\\\nonumber
\mathcal{C}_2:\|\mathbf{q}\|^2\leq P,\\\nonumber
\mathcal{C}_3:\frac{(d_j+d_j^*)}{2}.\frac{\mathbf{h}_j\mathbf{q}+(\mathbf{h}_j\mathbf{q})^H}{2}\geq
0, \forall j\in K,\\\nonumber
\mathcal{C}_4:\frac{(d_j-d_j^*)}{2i}.\frac{\mathbf{h}_j\mathbf{q}-(\mathbf{h}_j\mathbf{q})^H}{2i}\geq 0, \forall j\in K.\\
\end{cases}
\end{eqnarray}
\subsection{Modulation Selection}

In order to optimize the sum-rate, practical communication systems implement
adaptive modulation and coding schemes (MCS) which adapt the density of the transmitted constellation to the current SNR. Unfortunately, this adaptation cannot be  applied on a symbol-level because this would render the signalling overhead impractical. In this context, let us assume that the modulation of each user remains fixed during the channel coherence time of a quasi-static block fading channel. This way, each user has to be notified only once per $\tau_c$ about the constellation type that he has to detect. 
 The modulation for each user is selected at the beginning of each coherence
 time. To decide the most appropriate modulation
 type for each user, we use the optimal multicast, which is defined as
 \begin{eqnarray}
\label{sum_rate_mu}
\mathbf{Q}_{o}=\arg\underset{\mathbf{Q}:tr(\mathbf{Q})\leq P}{\max}\sum^K_{j=1}\phi_j\log_2(1+\frac{\mathbf{h}_j\mathbf{Q}\mathbf{h}^H_j}{\sigma^2}),
\end{eqnarray} 
to decide the highest modulation
 order for each user for the whole transmission frame (i.e. here is assumed
 to be equal to the channel coherence time) by the following criteria 
\begin{eqnarray}
\label{modcode}
\text{MCS}=\begin{cases}\text{no service}, \zeta^{'}_j\leq \zeta_{BPSK}\\
\text{BPSK},\zeta_{BPSK}\leq\zeta^{'}_j \leq\zeta_{QPSK}\\
\vdots\\
\\
\end{cases}
\end{eqnarray}
where $\zeta^{'}_j$ is the effective SNR at each user receiver.
Due to the fact that M-PSK data symbols are encoded
using the phase information, any higher order PSK symbol can be decoded
as lower PSK if each user knows its target modulation. For
example, the symbol $1\angle 45^\circ=\frac{1+i}{2}$
can be detected as $11$ if the agreed modulation between the transmitter and
receiver is QPSK, and it can be be detected as $1$ if the agreed modulation
is BPSK. As a result, when designing sum-rate CI algorithms at the transmitter, we can always assume that all users expect the highest-order PSK modulation. At the receiver, the demapping of the received symbol will depend on the assigned MCS.

\vspace{-0.2cm}
\subsection{Genie aided sum rate upper bound}
Based on  theorem (\ref{thmo}), the maximum sum rate can be found by solving
the following optimization
\hspace{-0.5cm}\begin{eqnarray}\nonumber
&\hspace{-0.5cm}\underset{p_1,\hdots,p_K}{\max}&\quad\sum^K_{k=1}\log_2\big(1+\|\mathbf{g}_k\|^2(|\xi_{kk}|^2{p_k}+\sum^K_{j=1,j\neq k}{p_j}|\xi_{kj}|^2)\big)\\
&\hspace{-0.5cm}s.t.&\sum^K_{k=1}p_k\leq P
\end{eqnarray}

\vspace{-0.4cm}
\subsection{Optimal solution}
The optimal solution cannot be found in a straight forward manner due to
the
different types of the constraints: the phase constraints and the threshold
constraints. However, we write the
Lagrangian function of the previous optimization (\ref{sumrate}) to get more insights about the problem
as (\ref{sum_rate}) and the derivative of the related sum rate problem (\ref{lagrange_sumrate}). Moreover, it
can be seen that it has different solutions than the power minimization problem. \\
\begin{figure*}[t]
\hspace{0.2cm}
\begin{tabular}[t]{c}
\begin{minipage}{17 cm}
\begin{eqnarray}\nonumber
\label{sum_rate}
\mathcal{L}(\mathbf{q})&=&\sum^K_{j=1}\phi_j\log_2(1+\frac{\|\mathbf{h}_j\mathbf{q}\|^2}{\sigma^2})+\sum^K_{j=1}\mu_j\Big(\frac{(d_j+d_j^*)}{2}\frac{\mathbf{h}_j\mathbf{q}+(\mathbf{h}_j\mathbf{q})^H}{2}\Big)+\sum^K_{j=1}\alpha_j\Big(\frac{(d_j-d_j^*)}{2i}\frac{\mathbf{h}_j\mathbf{q}-(\mathbf{h}_j\mathbf{q})^H}{2i}\Big)\\
&+&\sum^K_{j=1}\kappa_j\Big(\mathbf{h}_j\mathbf{q}-\mathbf{q}^H\mathbf{h}^H_j-\tan(\angle
d_j)\big(\mathbf{h}_j\mathbf{q}+\mathbf{q}^H\mathbf{h}^H_j\big)\Big)+\gamma(\mathbf{q}^H\mathbf{q}-P)
\end{eqnarray}
\vspace{-0.2cm}
\begin{eqnarray}
\label{lagrange_sumrate}
\frac{d\mathcal{L}(\mathbf{q})}{d\mathbf{q}^*}=\sum^K_{j=1}\phi_j\frac{\mathbf{h}^H_j\mathbf{h}_j\mathbf{q}}{\sigma^2+\mathbf{q}^H\mathbf{h}^H_j\mathbf{h}_j\mathbf{q}}+\sum^K_{j=1}\mu_j\frac{d_j+d^*_j}{4}\mathbf{h}^H_j+\sum^K_{j=1}\alpha_j\frac{(d_j-d_j^*)}{2}\mathbf{h}^H_j-\sum^K_{j=1}\kappa_j(1+\tan(\angle
d_j))\mathbf{h}^H_j+\gamma\mathbf{q}
\end{eqnarray}
\end{minipage}\\
\hline
\hline
\end{tabular}
\end{figure*}
\vspace{-0.2cm}
\subsubsection{Low SNR approximation} To simplify the analysis, we use the low SINR approximation
$\log_2(1+\frac{\alpha_i}{\sigma^2})\sim \frac{\alpha_i}{\sigma^2}$ which
is valid in the regime $\sigma^2\rightarrow \infty$. Thus
the optimization problem (\ref{sumrate}) can be written as
\begin{eqnarray}
\label{L_sum0}
\hspace{-0.2cm}&\underset{\mathbf{q}}{\max}&\quad\sum^K_{j=1}\phi_j\frac{\|\mathbf{h}_j\mathbf{q}\|^2}{\sigma^2}\\\nonumber
&s.t.&\begin{cases}\mathbf{h}_j\mathbf{q}-\mathbf{q}^H\mathbf{h}^H_j=i\tan(\angle d_j)(\mathbf{h}_j\mathbf{q}+\mathbf{q}^H\mathbf{h}^H_j)\quad\forall j\in K,\\\nonumber
{\|\mathbf{q}\|^2}\leq P.\\
\mathcal{C}_3:\frac{(d_j+d_j^*)}{2}\frac{\mathbf{h}_j\mathbf{q}+(\mathbf{h}_j\mathbf{q})^H}{2}\geq
0, \forall j\in K,\\\nonumber
\mathcal{C}_4:\frac{(d_j-d_j^*)}{2i}\frac{\mathbf{h}_j\mathbf{q}-(\mathbf{h}_j\mathbf{q})^H}{2i}\geq 0, \forall j\in K\end{cases}
\end{eqnarray}
and the corresponding Lagrangian function can be written as
\begin{eqnarray}\nonumber
\label{L_sum}
\hspace{-0.1cm}\mathcal{L}(\mathbf{q})&=&\sum^K_{j=1}\phi_j\frac{\|\mathbf{h}_j\mathbf{q}\|^2}{\sigma^2}+\sum^K_{j=1}\pi_j\Big(\mathbf{h}_j\mathbf{q}-\mathbf{q}^H\mathbf{h}^H_j-i\tan(\angle d_j)\\\nonumber
&\times&(\mathbf{h}_j\mathbf{q}+\mathbf{q}^H\mathbf{h}^H_j)\Big)+\sum^K_{j=1}\beta_j\frac{(d_j+d_j^*)}{2}\frac{\mathbf{h}_j\mathbf{q}+(\mathbf{h}_j\mathbf{q})^H}{2}\\\nonumber
&+&\sum^K_{j=1}\alpha_j\frac{(d_j-d_j^*)}{2i}\frac{\mathbf{h}_j\mathbf{q}-(\mathbf{h}_j\mathbf{q})^H}{2i}+\zeta\Big(\mathbf{q}^H\mathbf{q}-P\Big).\\
\end{eqnarray}
If we denote $\alpha^{'}_j=\alpha_j\frac{d_j-d^*_j}{2i}$, $\beta^{'}_j=\beta_j\frac{d_j+d^*_j}{2}$,
$\pi^{'}_j=-\pi_j(1+\tan(\angle d_j))$, the derivative of (\ref{L_sum}) can be
formulated as
\begin{eqnarray}
\frac{d\mathcal{L}(\mathbf{q})}{d\mathbf{q}^*}=\sum^K_{j=1}\phi_j\frac{\mathbf{h}^H_j\mathbf{h}_j\mathbf{q}}{\sigma^2}+\sum^K_{j=1}\Big(\alpha^{'}_{j}+\pi^{'}_j+\beta^{'}_j\Big)\mathbf{h}^H_j+\zeta\mathbf{q}.
\end{eqnarray}
Then, $\mathbf{q}$ can be expressed as 
\begin{eqnarray}
\label{lowsnr}
\mathbf{q}=\Big(\sum^K_{j=1}\phi_j\frac{\mathbf{h}^H_j\mathbf{h}_j}{\sigma^2}+\zeta\mathbf{I}\Big)^{-1}\bigg(\sum^K_{j=1}\Big(\alpha^{'}_{j}+\pi^{'}_j+\beta^{'}_j\Big)\mathbf{h}^H_j\bigg).
\end{eqnarray}
 Since the assumed approximation works in the low SNR regime (i.e. noise limited scenario $\sigma^2\rightarrow \infty$),
 the expression in (\ref{lowsnr}) can be simplified into the following expression
 \begin{eqnarray}
 \label{sumrate_w}
 \mathbf{q}=\sum^K_{j=1}\underset{{a_j}}{\underbrace{\Big(\pi^{'}_j+\beta^{'}_j+\alpha^{'}_j\Big)}}\mathbf{h}^H_j
.
 \end{eqnarray}
 It can be noted that the precoding formulation at the low SNR regime resembles the generic
 formula of pwr-min precoding. Moreover, the weight for each user vanishes in this regime.
 Based on this fact, we propose heuristic precoding schemes that aim at maximizing
 the sum rate of the downlink multiuser transmissions.
\vspace{-0.3cm}
\subsection{Heuristic schemes }
\vspace{-0.1cm}
Since the solution for the sum rate maximization problem in (\ref{SR}) is difficult to find, we propose two heuristic algorithms to tackle this problem as follows  \smallskip
\subsubsection{Phase
alignment algorithm}
The sum rate maximization problem can be solved exploiting the low SNR approximation expression in (\ref{sumrate_w}). This expression contains $3K$ variables
(i.e. $\alpha^{'}_j$, $\beta^{'}_j$, and $\pi^{'}_j$ $\forall j\in K$) that have to satisfy the phase alignment constraints $\mathcal{C}_1$ in (\ref{SR})
while it should maximize the sum rate in the system. Utilizing the eigenvectors
of $\mathbf{H}\mathbf{H}^H$, $\mathbf{q}$ can be formulated as
\vspace{-0.25cm} 
\begin{eqnarray}\nonumber
\mathbf{q}=\sum^K_{j=1}a_j\mathbf{h}^H_j=\sum^K_{j=1}b_j\mathbf{e}_j,
\end{eqnarray}
where $\mathbf{e}_j$ is the j$^{th}$ eigenvector of $\mathbf{H}\mathbf{H}^H$.
This makes the received SINR formulated as
\begin{eqnarray}
\zeta_{i}=\sum^K_{j=1}|b_j|^2\mathbf{h}_i\mathbf{e}_j\mathbf{e}^H_j\mathbf{h}^H_i
\end{eqnarray} 
The optimization (\ref{sumrate}) can be reformulated as
\begin{eqnarray}\nonumber
\label{eigenSR}
&\underset{{b_i}}{\max}&\quad\sum^K_{j=1}\log_2(1+\zeta_j)\\
&s.t.&\quad\begin{cases}\mathcal{C}_1:\sum^K_{i=1}|{b}_i|^2\leq P.\\
\mathcal{C}_2:\angle\mathbf{h}_j(\sum^K_{i=1}b_i\mathbf{e}_i)=\angle d_j, ~\forall j\in
K\end{cases}
\end{eqnarray}
\footnotesize
\begin{tabular}{p{8.3cm}}
\hline
\textbf{A3}: Sum Rate Maximization - Phase alignment algorithm (CISR-PA)\\
\hline
\begin{enumerate}
\item Solve the optimization (\ref{eigenSR}) without $\mathcal{C}_2$ and
find $|b_i|$.
\item Select the modulation type for each user based on the achieved SINR
$\zeta_j$.
\item Solve the following set of equations by finding $\angle b_i$
\begin{eqnarray}
\mathbf{h}_j(\sum^K_{i=1}|b_i|\mathbf{e}_i\exp(\angle b_i))=\zeta_j,\forall
j\in K.
\end{eqnarray}
\item Scale $\mathbf{q}$ by setting $\|\mathbf{q}\|^2=P$

\end{enumerate}\\
\hline

\end{tabular}\\
 \normalsize
  
\subsubsection{ Greedy Algorithm}

In this algorithm, we jointly utilize the solution for the unconstrained optimal sum rate maximization multicast problem (\ref{sum_rate_mu})
and the constrained constellation power minimization problem (\ref{powccm}) to
propose a new heuristic algorithm. 
The intuition of using such algorithm is to find the subset
of users that has similar characteristics in terms of the co-linearity
with respect to the optimal multicast directions (i.e. the projection to
eigenvector associated with the maximum eigenvalue of $\mathbf{Q}_{o}$). It should be noted that this co-linearity is defined by the projection
and the angle of the projection as it is illustrated in the following algorithm\\

\footnotesize
\begin{center}
\begin{tabular}{p{8.3cm}}
\hline
\textbf{A4}: Greedy sum rate maximization\quad (CISR-G)\\
\hline
\begin{enumerate}
\item Find the optimal input covariance by solving the unconstrained multicast
problem (\ref{sum_rate_mu}).

\item Find the optimal direction (i.e. the maximum eigenvector $\mathbf{\Phi}_\circ$) that maximizes the projection 
\begin{eqnarray}
\mathbf{\Phi}_{o}=\arg\underset{\mathbf{\Phi}}{\max}\quad\mathbf{\Phi}\mathbf{Q}_o\mathbf{\Phi}^H
\end{eqnarray}

\item For all $j$, evaluate $\mathbf{g}_j=\mathbf{h}_j\mathbf{\Phi}_{\circ}$. 
\item Find $j^*=\arg\underset{j}{\max}\quad\|\mathbf{g}_j\|^2$

\item Select the modulation order based on (\ref{modcode}), using $\zeta^{'}_j$

\item  For all possible combinations $\mathcal{G}=\cup_{j}{\mathcal{K}_j}$, evaluate the sum of the users' projection and 
\begin{eqnarray}
\Lambda(\mathcal{K})=\sum_{j\in\mathcal{K}_j\subset \mathcal{G}}\mathbf{g}_j.
\end{eqnarray}

\item Select the subset of users
for all possible of users combinations and find the maximum that has the highest projection
\begin{eqnarray}
\mathcal{K}^*_j=\arg\underset{\mathcal{K}_j}{\max}\quad\|\Lambda (\mathcal{K}_j)\|^2
\end{eqnarray}

\item Evaluate the respective power minimization problem SNR target values.
\begin{eqnarray}
\iota_{j}=\frac{\log_2(\|\mathbf{g}_j\|^2)}{\sum_{i\in\mathcal{K}^*_j}\log_2(\|\mathbf{g}_i\|^2)}
\end{eqnarray}

\begin{eqnarray}
\zeta^{'}_j=\begin{cases}\|\mathbf{g}_j\|^2,\text{if}\quad \|\mathbf{g}_j\|^2\leq
\iota_{j}\|\mathbf{g}^{*}_j\|^2\\
\iota_{j}\|\mathbf{g}^{*}_j\|^2,\text{if}\quad\|\mathbf{g}_j\|^2\geq
\iota_{j}\|\mathbf{g}^{*}_j\|^2 \end{cases}
\end{eqnarray}

\item Solve the related power minimization problem 

\hspace{-0.7cm}\begin{eqnarray}\nonumber
&\underset{\mathbf{q}}{\min}&\mathbf{q}^H\mathbf{q}\\
&s.t.&\begin{cases}\mathcal{C}1:\frac{\mathbf{h}_i\mathbf{q}-(\mathbf{h}_i\mathbf{q})^H}{2i}=\sigma\sqrt{\zeta^{'}_j}\mathcal{I}\{d_i\},
\forall
i\in \mathcal{K}^*_i\\
\mathcal{C}2:\frac{\mathbf{h}_i\mathbf{q}+(\mathbf{h}_i\mathbf{q})^H}{2}=\sigma{\sqrt{\zeta^{'}_i}}\mathcal{R}\{d_i\},\forall
i\in\mathcal{K}^*_i.\end{cases}
\end{eqnarray}

\item Then, scale $\mathbf{q}$ such that $\|\mathbf{q}\|^2=P$. 
\end{enumerate}\\
\hline
\end{tabular}
\end{center}
\normalsize
\section{Algorithms complexity} 
The complexity of the proposed algorithms is an important aspect to assess their feasibility. A discussion about the complexity of each algorithms can be summarized as follows:
\begin{itemize}
\item CIMRT requires an SVD to be employed on the channel $\mathbf{H}$ which has the complexity of $4M^2K+8MK^2+9K^3$ and to solve $\frac{K^2-K}{2}$ times a set of two non-linear equations simultaneously.
\item CIPM requires solving $2K$ linear equations simultaneously, which means that it has less complexity than CIMRT
\item CIMM requires solving $2K$ linear equations simultaneously at each bisection iteration. Moreover, the bisection has a complexity of $\log_2(\max r_k)$.
\item CISR-PA requires solving the convex optimization in (\ref{eigenSR}) without $\mathcal{C}_2$ which calls numerical solvers such as SeDuMi. In order to solve this convex optimization, we need to find the eigenvalue decomposition of $\mathbf{H}\mathbf{H}^H$ which has the complexity of $4M^2K+(8M+1)K^2+9K^3$.  Moreover, this algorithm requires solving $K$ linear equations simultaneously.
\item CISR-G requires solving the convex optimization (\ref{sum_rate_mu})
 and finding the eigenvector associated with maximum eigenvalue. Furthermore, it needs to search all possible combinations $\sum^K_{i}{K\choose i}$ to select the most suitable subset of users to serve in coherence time. Finally, we need to solve $2K$ linear equations simultaneously.
\end{itemize}
\vspace{-0.3cm}
\section{Numerical results}\label{sim}
\vspace{-0.15cm}
In order to assess the performance of the proposed transmissions schemes, Monte-Carlo simulations of the different algorithms have been conducted to
study the performance of the proposed techniques and compare to the state
of the art techniques. The adopted channel model is assumed
to be 
\begin{eqnarray}
\mathbf{h}_k\sim\mathcal{CN}(0,\sigma^2).
\end{eqnarray}
We define the energy efficiency metric as follows 
\begin{eqnarray}
\label{metric}
&\eta&=\sum^K_{j=1}\frac{R_j}{P},\\\nonumber
\end{eqnarray}\vspace{-0.1cm}
where
\begin{eqnarray}\nonumber
&R_j&=\log_2(1+\zeta_j).
\end{eqnarray}
The motivation of using this metric is the fact that CRZF and CIMRT are achievable constructive techniques and cannot be designed based on optimization problems. For the
sake of fairness in comparison, we use the metric in (\ref{metric}).
\begin{center}
\begin{table}
\hspace{-0.02cm}\begin{tabular}{|p{1.3cm}|p{5.2cm}|p{1.0cm}|}
\hline
Acronym&Technique&equation\\
\hline
CIZF&Constructive Interference Zero Forcing&\ref{CIZF}\\
\hline
CIMRT&Constructive Interference Maximum Ratio Transmissions&\ref{CIMRT},
\textbf{A1}\\
\hline
CIPM& Constructive Interference- Power Minimization&\ref{CIPM}\\
\hline
CIMM& Constructive Interference-Maximization the minimum SINR&\textbf{A2}\\
\hline
CISR-G& Constructive Interference-Sum Rate maximization with Greedy approach&\textbf{A3}\\
\hline
CISR-PA& Constructive Interference-Sum Rate maximization with phase alignment&\textbf{A4}\\
\hline
GE& Genie aided upperbound&\ref{pr}\\ 
\hline
Multicast&Optimal Multicast &\ref{powm1}\\
\hline
\end{tabular}
\vspace{0.2cm}
\caption{Summary of the proposed algorithms, their related acronyms, and
their related equations and algorithms}
\end{table}
\end{center}
\vspace{-0.5cm}
\begin{figure}[h]
\vspace{-0.3cm}
\begin{center}
\vspace{-0.15cm}
\hspace{-0.5cm}\includegraphics[scale=0.56]{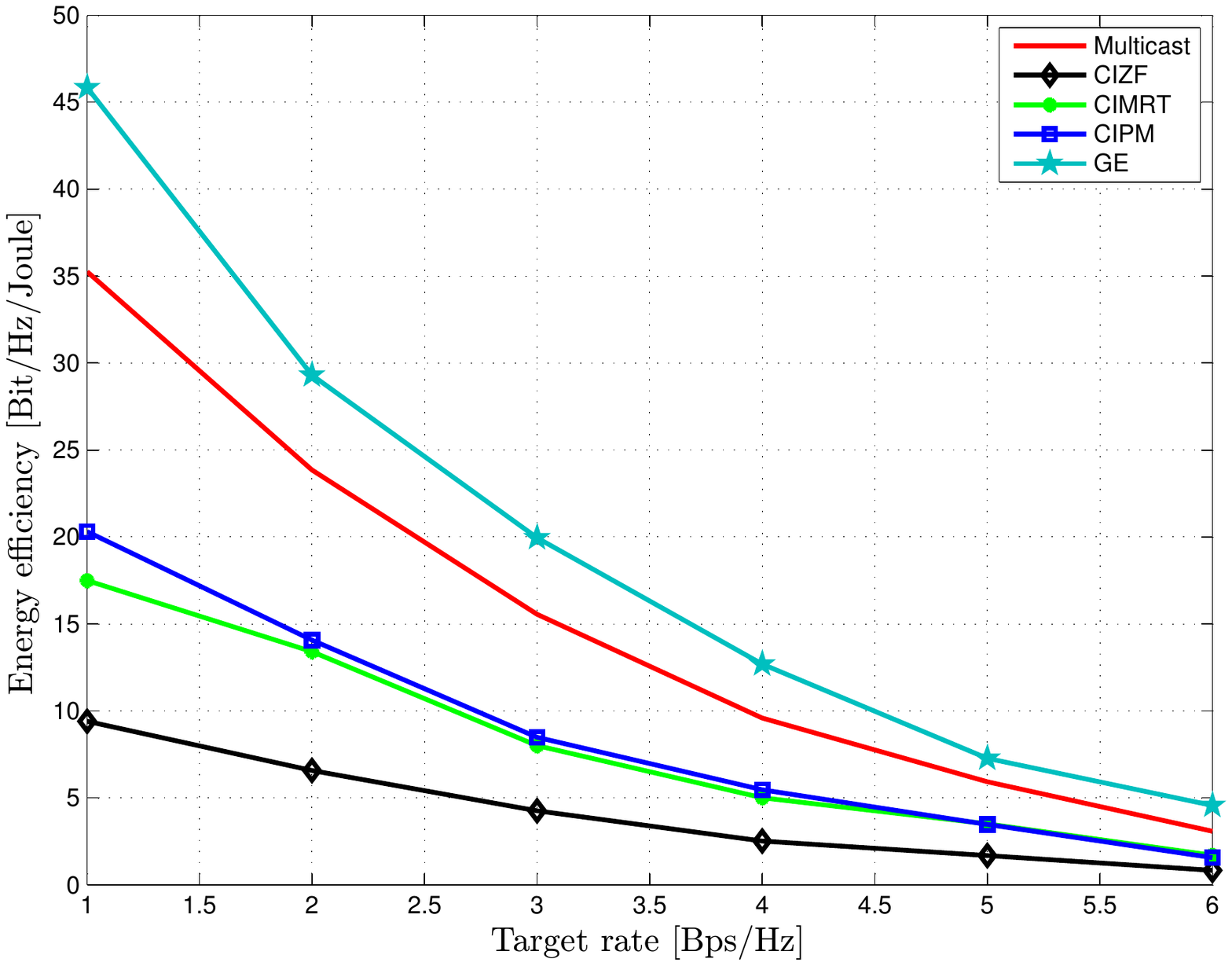}
\vspace{-0.3cm}
\caption{\label{engeff}Energy efficiency vs target rate. }
\end{center}
\end{figure}
In Fig. (\ref{engeff}), we depicted the performance of the proposed techniques from energy efficiency
perspective with respect to the target SNR. We assume the scenario of
$M=3$, 
$K=2$. Since there is no phase alignment constraints
in the unconstrained multicast, it is anticipated that optimal multicast
achieves the highest energy efficiency in the system and this is confirmed
by simulation. Moreover, the gap between the genie aided theoretical bound
and multicast reduces with increasing the target rate (i.e. modulation order
for the genie aided). On contrary, CIZF shows inferior performance in comparison with all
depicted techniques. It has already been proven that CIZF 
outperforms the conventional techniques like minimum mean square error (MMSE)
beamforming and zero forcing beamforming (ZFB) \cite{Christos}. In comparison with other depicted techniques, it can be concluded that
the proposed constructive interference with the CIPM has a better energy efficiency in comparison with CIZF.
This can be explained by the channel inversion step in CIZF which wastes energy in
decoupling the effective users' channels and then exploit the
interference among the multiuser streams. Moreover, it can be deduced that
CIMRT has a very close performance to CIPM especially at high targets. CIMRT
outperforms CIZF at expense of complexity.

The comparison among optimal multicast, CIPM and genie-aided bound is illustrated in Fig. (\ref{powconsg}).
 The assumed scenario $K=2$, $M=2$. It can concluded that the power consumption gap between the optimal multicast and CIPM is fixed for all target rates. This relation holds also for the gap between the genie-aided upperbound and CIPM.
 
\vspace{-0.3cm}
\begin{figure}[h]
\vspace{-0.1cm}
\begin{center}

\hspace{-0.4cm}\includegraphics[scale=0.56]{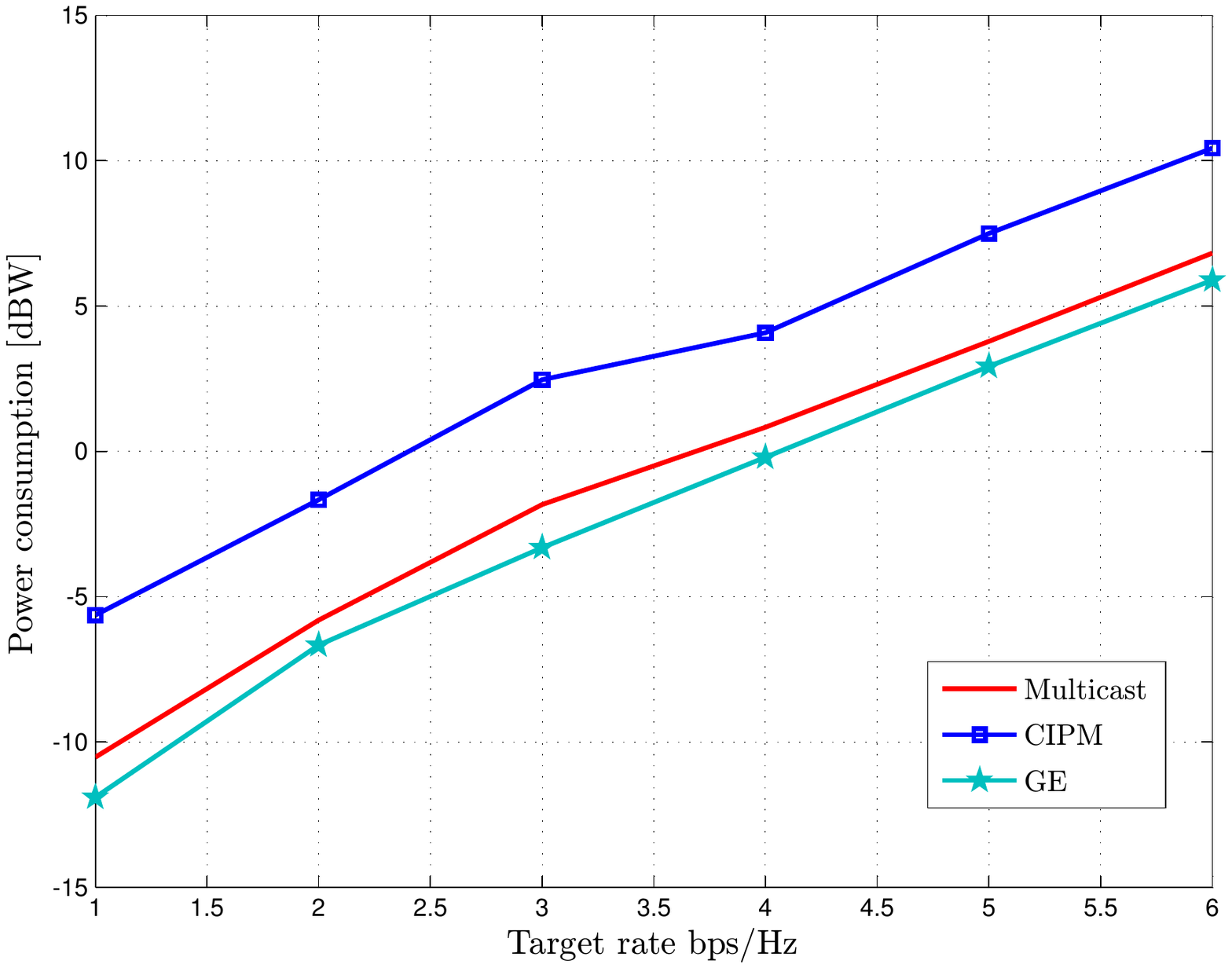}
\vspace{-0.3cm}
\caption{\label{powconsg}\vspace{-0.1cm}Power consumption vs target rate. }
\end{center}
\end{figure}

%
%


\vspace{-0.1cm}
\begin{figure}[h]
\vspace{-0.1cm}
\begin{center}
\vspace{-0.1cm}
\includegraphics[scale=0.56]{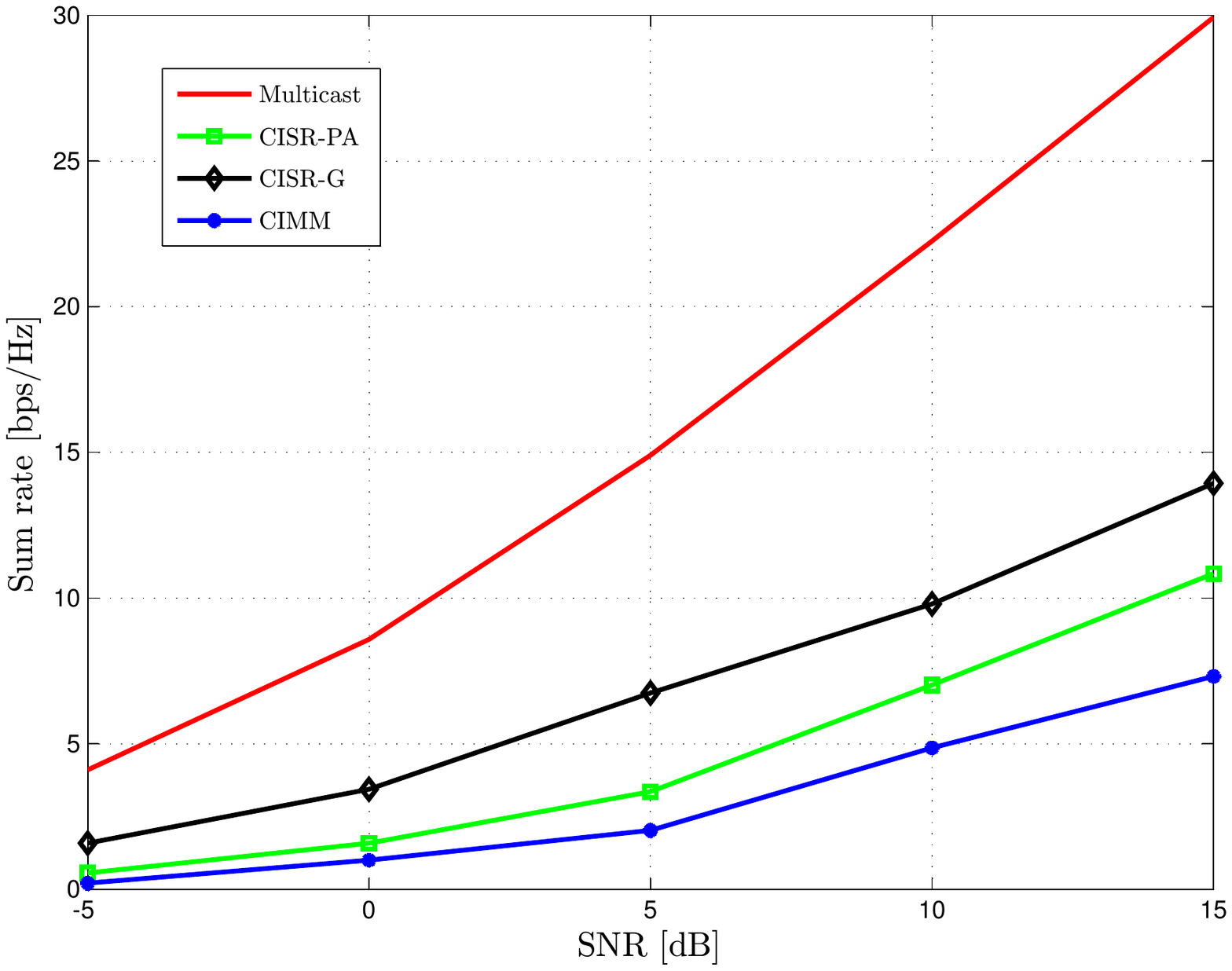}
\vspace{-0.5cm}
\caption{\label{suml} Sum rate vs transmit power. }
\end{center}
\end{figure}
The sum rate performance is illustrated in Fig. (\ref{suml}) in the low-mid
SNR regime. In this scenario,
we consider $K=5$, $M=5$ with equal weights for the sum rate and max-min problems. It can be noted that the
sum rate of algorithm 3, which is implemented in figure as (CISR-G) outperforms
the phase alignment algorithm. It can be concluded that at low SNR, it is
better to preselect the users that have suitable channels to work together. 
In the constructive interference scenario, we tend to select the users whose
channels
are co-linear which opposes the conventional multiuser MISO techniques.
However, for the same scenario, in the high SNR regime which is depicted
in Fig. (\ref{CISR-H}), the phase alignment
algorithm (CISR-PA) shows a better performance than (CISR-G), this means that
it is better not to preselect the users and serve all $K$ users. The performance gap
between the two algorithms increases with SNR. The resulted loss of finding
all the phases that grant the symbol detection by all users has less
effect
on the system performance in comparison to switching off a few users. One should keep in mind that in all scenarios multicast
is just an upper bound and is incapable of delivering different messages to different user. The difference in power consumption is
anticipated since the sum rate problem does not take into the account the
user with the weakest SNR. Finally, it can be concluded that the sum rate for the fairness achieving algorithm (CIMM) is less than (CISR-G) and (CISR-PA) in the low SNR regime. While for high SNR, this fact changes CIMM performs better than (CISR-G) and worse than (CISR-PA).  \\
 
\vspace{-0.3cm}
\begin{figure}[h]
\vspace{-0.4cm}
\begin{center}
\vspace{-0.1cm}
\includegraphics[scale=0.56]{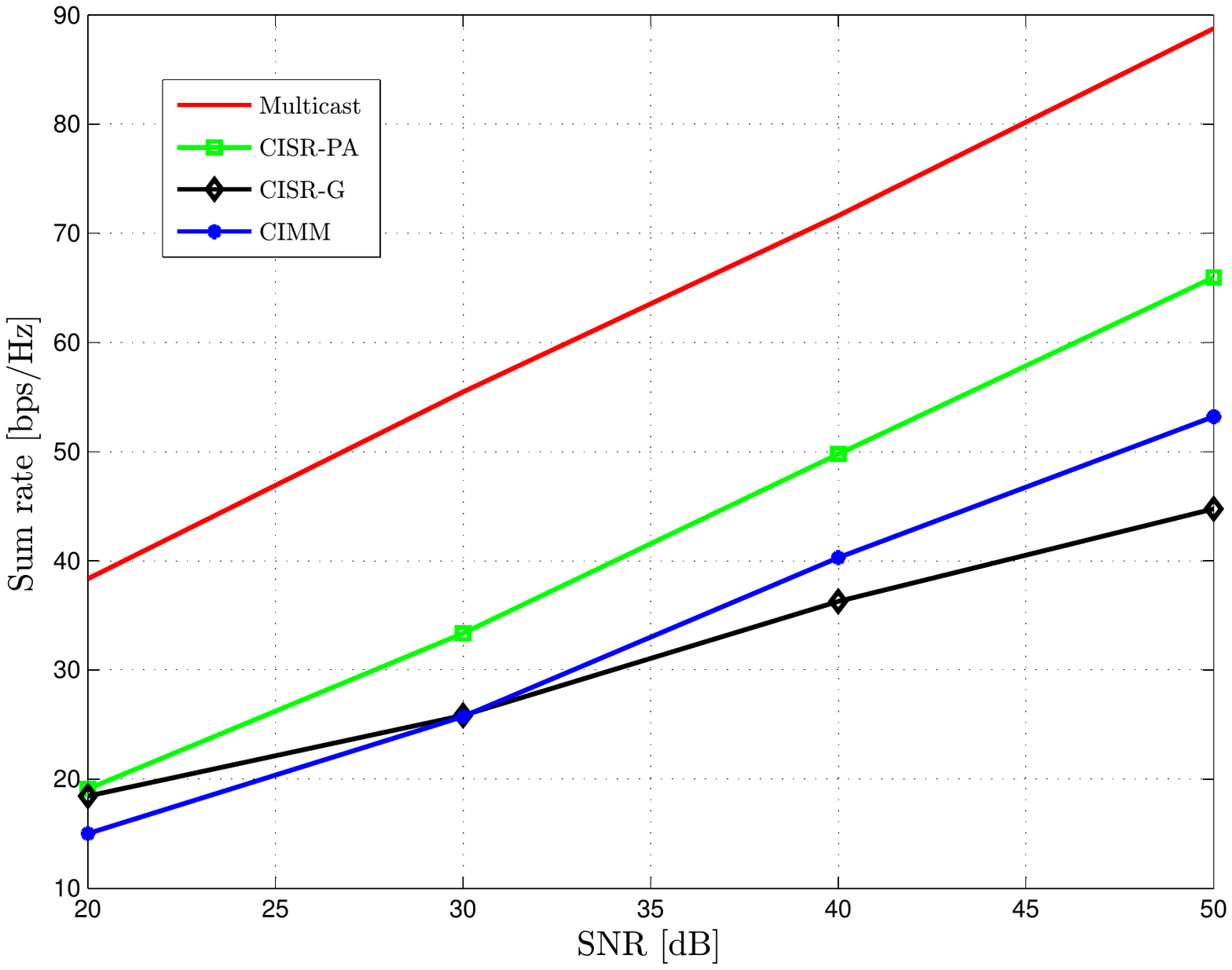}
\vspace{-0.5cm}
\caption{\label{CISR-H}Sumrate vs transmit power. }
\end{center}
\end{figure}

\vspace{-0.3cm} 
\section{conclusions}
In this paper, we exploit the CSI and DI to constructively correlate the transmitted
symbols in symbol-level precoding. This enables interference exploitation among the multiuser transmissions assuming M-PSK modulation. Based on the idea of correlating the transmitted vectors, the connection between the constructive interference precoding and
multicast precoding is characterized. We present several constructive interference designs from different perspective: minimizating the transmitted power while granting certain SNR thresholds for all users, maximizating the fairness among the users, and boosting the sum rate  with fixed transmit power. From the results, it can
be concluded that the max-min SINR problem is related to the power minimzation
problem. Moreover, we tackle the sum rate maximization problem and propose
heuristic solutions to solve the problem. From the simulations, it can be concluded that the CIPM has a fixed transmit power gap with respect to mulicast at different target rates. The sum rate maximization heuristic algorithms vary according to the SNR; CISR-G works very well at low SNR and this changes at high SNR while CISR-PA performs well at high SNR and this pattern changes at low SNR. 

\end{document}